\newtheorem{theorem}{Theorem}[section]
\newtheorem{lemma}{Lemma}[theorem]
\newtheorem{proposition}{Proposition}[section]
\theoremstyle{definition}
\newtheorem{definition}{Definition}[section]
\newtheorem{example}{Example}[section]
\newcommand{\Lbox}[1]{\left[ #1\right] }
\newcommand{\Ldiamond}[1]{\langle #1\rangle}
\newcommand{\Propo}[1]{\llbracket{#1}\rrbracket}
\newcommand{\set}[1]{\{#1\}}
\title{Logic of (Common or Distributed)$^\ast$ Knowledge}
\author{Chenwei Shi\thanks{This research is supported by Beijing Philosophy and Social Science Foundation (Grant number: 24DTR012)}
\institute{Department of Philosophy\\
Tsinghua University\\
Beijing, China}
\email{scw@tsinghua.edu.cn}
}
\begin{document}
\maketitle

\begin{abstract}
  In this paper, we generalize epistemic logic so that it can help reason about ways of combining common knowledge and distributed knowledge such as ``common distributed knowledge'', ``distributed common knowledge'', ``distributed common distributed knowledge'' and so on. Moreover, we study the logic of its dynamic update by arbitrary reading events. We axiomatize these logics and prove their soundness and completeness. 
\end{abstract}

\section{Introduction}

In epistemic logic \cite{reasoning-about-knowledge}, common knowledge and distributed knowledge are two notions of group knowledge which have been studied extensively. 
In a Kripke model $(W,\set{R_a}_{a\in A},V)$ where each binary relation $R_a$ respresents one agent $a$'s epistemic state, 
the distributed knowedge of a group $G\subseteq A$ of agents is represented by $\bigcap_{a\in G} R_a$; 
the common knowledge of $G$ is represented by the transitive relfexive closure of the union of $R_a$ for $a\in G$, i.e. $(\bigcup_{a\in G}R_a)^\ast$, the smallest set of pairs which contains $\bigcup_{a\in G}R_a$ and is closed under transitivity and reflexivity, or equivalently, the set of pairs which can be connected by a path of finite steps of relation  $\bigcup_{a\in G}R_a$.

A new notion of group knowledge, called ``common distributed knowledge'', is studied by a recent paper \cite{AS2023LPAR}.
As the name indicated, this notion combines common knowledge and distributed knowledge somehow. 
In fact, it is a notion of knowledge for a group of groups of agents. 
Taking a group of groups $G = \set{G_1,\ldots,G_n}$, the notion of common distributed knowledge is represented by the relation $(\bigcup^n_{i=1} \bigcap_{a\in G_i} R_a)^\ast$. 
Individual agents' knowledge in the notion of common knowledge is replaced by groups' distributed knowledge, but the definion of common knowledge is kept the same.

The notion ``common distributed knowledge'' is only one way of combining common knowledge and distributed knowledge. 
It indicates other possible ways. 
For example, distributed common knowledge, distributed common distributed knowledge and so on. 
In this paper, we develop a logic which can help reason about all these ways of combining common knowledge and distributed knowledge. 

The multiple iterations of common knowledge and distributed knowledge may sound too complex and abstract to make any sense. 
The following example hopefully makes it clear how these complex epistemic notions can naturally show up.
\begin{example}\label{exap:intro}

    One day, Alice, Bob and Carter are playing card. Their father helps distribute a deck of cards (52) to them. In the current round, Bob and Carter team up, which means that they share one card, in addition to the 25 cards assigned to each of them. They can check the shared card together so that they commonly know what it is. Alice holds the remaining card.
    
    Now according to the rules of the card game they are playing, Alice can choose to check the shared card of Bob and Carter. In return for this, Bob and Carter can check Alice's card. 

    Alice exercises her right and checks the card. Bob and Carter also check Alice's card. In the scenario, the actions of learning are public while what is learned is not (at least to the audience, for example, their father).

    Restricting our attention to the information relevant to the current round of the card game, that is, who has what cards, it is not hard to figure out that Bob and Carter commonly know more than what Alice knows after they check each other's card.
      
    On one hand, what Alice knows after she checked the shared card is a simple instance of ``distributed common knoledge''. It is Alice's common knowledge (for an individual agent, her common knowledge is the same as her knowledge) and Bob and Carter's common knowledge that undergo the operation of distributed knowledge.  On the other hand, what Bob and Carter commonly know after they check Alice's card is a simple instance of ``common distributed knowledge''. It is Bob and Alice's distributed knowledge and Bob and Carter's distributed knowledge that undergo the operation of common knowledge.

    So, in this example,  there is more  common distrubuted knowledge of the group $\{\text{Bob,Alice}\}$ and the group $\{\text{Carter,Alice}\}$ than the distributed common knowledge of Alice and the team $\{\text{Carter,Alice}\}$.  But is it always the case? 
\end{example}
To have a logic for reasoning about the above type of questions, it is key to observe that the intersection of two equivalence relations/reflexive and transitive relations and the reflexive and transtive closure of the union of two equivalence relations/reflexive and transitive relations  can serve as  meet and join respectively in a lattice.
This observation serves as the cornerstone of our axiomatization of the logic and its dynamic extensions. 

In section two, we present the logic where iteration of the operations for defining distributed and common knowledge is allowed. In section three, we extend the logic with a dynamic operator for semi-public reading events. In section four, we consider a more general type of dynamics than semi-public reading events, that is, arbitrary reading events. For the static logic, we provide a sound and complete axiomatization. For both of its dynamic extensions, we find their reduction laws, as usually done in dynamic epistemic logic (c.f. \cite{hansWB2007dynamic} and \cite{sep-dynamic-epistemic}).

\section{Syntax, Semantics and Axiomatization}\label{sec:SSA}

In this section, we present our logic's syntax, semantics and axiom system. 

\begin{definition}[Language $\mathcal{L}$]
    $$\mathsf{T}\ni \tau ::= x\mid \tau +\tau \mid \tau \cdot \tau$$
where $x\in \mathsf{AT}$, a finite set of atomic terms. 
$$\varphi ::= p \mid \neg \varphi\mid \varphi\vee \varphi \mid \Ldiamond{\tau}\varphi$$
where $p\in \mathsf{AF}$, a countable set of atomic formulas, and $\tau \in \mathsf{T}$. $\wedge$, $\rightarrow $ and $\leftrightarrow$ are defined as usual, and $\Lbox{\tau}\varphi := \neg\Ldiamond{\tau}\neg \varphi$.
\end{definition}

The simplest form of distributed knowledge/common knowledge is expressed as $[x_1\cdot x_2]\varphi\,$/$\,[x_1 + x_2]\varphi$ respectively. The common distributed knowledge in  example \ref{exap:intro} is expressed as $\Lbox{(b\cdot a) + (c\cdot a)}\varphi$, while the distributed common knowledge is expressed as $\Lbox{a\cdot (b+c)}\varphi$. It is very hard to read formulas of the form $\Lbox{\tau}\varphi$ in natural language when $\tau$ becomes complex. However, it should be not very hard to see how the formal language can help us express those complex ways of combining distributed and common knowledge in an efficient way. 

\begin{definition}[Regular model]
    A model 
    $\mathbf{M} = (W,\set{R_\tau\mid \tau\in \mathsf{T}},V)$ is tuple where $W$ is a set of possible states, $R_\tau$ is a binary relation on $W$ for each term $\tau$ in $\mathsf{T}$ and $V: \mathsf{AF}\rightarrow \wp(W)$ is a vaulation function. A regular model is a model which satisfies
    \begin{enumerate}
        \item $R_{\sigma\cdot \tau} = R_\sigma \cap R_\tau$ 
        \item $R_{\sigma + \tau} = (R_\sigma\cup R_\tau)^\ast$       
    \end{enumerate}
    for all $\sigma,\tau\in \mathsf{T}$. When all the relations are equivalence relations/reflexive and transitive relations, a regular model is called an ``S5/S4-regular model'' respectively.
\end{definition}
A key observation is that in an S5/S4-regular model, $(\set{R_\tau\subseteq W\times W\mid \tau\in \mathsf{T}},\cdot,+)$ where $R_\sigma\cdot R_\tau = R_\sigma\cap R_\tau$ and $R_\sigma + R_\tau =  (R_\sigma\cup R_\tau)^\ast$ consititutes a lattice. That is, the sublattice of the lattice of all equivalence relations/preorders over $W$ which is generated by $\set{R_x\mid x\in \mathsf{AT}}$.

\begin{definition}[Truth condition]
    For a model $\mathbf{M}$ and a possible state $w$ in it,
    \begin{center}

    \begin{tabular}[h!]{lcl}
        $\mathbf{M},w\models  p$ & iff  & $w\in V(p)$\\
        $\mathbf{M},w\models \neg \varphi$ & iff  & $\mathbf{M},w\not\models \varphi$\\
        $\mathbf{M},w\models \varphi\vee \psi$ & iff  & $\mathbf{M},w\models\varphi$ or $\mathbf{M},w\models\psi$ \\
        $\mathbf{M},w\models \Ldiamond{\tau} \varphi$ & iff  & $\mathbf{M},v\models \varphi$ for some  $v\in R_\tau(w)$\\
        
    \end{tabular}
            
    \end{center}
\end{definition}

\begin{definition}[Axiom system $S5\mathsf{LCDK}$] \mbox{ }\\
    \begin{tabular}[h!]{lll}
        (CPL) & All tautologies in CPL &\\
        (Rules) &  Mpdus Ponens & Necessitation rule \\
        (S5) & & \\
        K & $\Lbox{\tau}(\varphi\rightarrow \psi)\rightarrow (\Lbox{\tau}\varphi\rightarrow \Lbox{\tau}\psi)$ & \\
        T & $\Lbox{\tau}\varphi\rightarrow \varphi$ & \\
        4 & $\Lbox{\tau}\varphi\rightarrow \Lbox{\tau}\Lbox{\tau}\varphi$ & \\
        5 & $\neg\Lbox{\tau}\varphi\rightarrow \Lbox{\tau}\neg\Lbox{\tau}\varphi$ & \\
        dual & $\Ldiamond{\tau}\varphi\leftrightarrow \neg\Lbox{\tau}\neg \varphi$  & \\
        (LATTICE) & & \\
            Idempotency & $\Ldiamond{\tau\cdot \tau}\varphi\leftrightarrow \Ldiamond{\tau}\varphi$ &$\Ldiamond{\tau + \tau}\varphi\leftrightarrow \Ldiamond{\tau}\varphi$\\
            Commutativity & $\Ldiamond{\tau\cdot \sigma}\varphi\leftrightarrow \Ldiamond{\sigma \cdot \tau}\varphi$ & $\Ldiamond{\tau + \sigma}\varphi\leftrightarrow \Ldiamond{\sigma + \tau}\varphi$ \\
            Associativity & $\Ldiamond{(\rho\cdot \sigma)\cdot \tau}\varphi\leftrightarrow \Ldiamond{\rho \cdot (\sigma\cdot \tau)}\varphi$ & $\Ldiamond{(\rho + \sigma) + \tau}\varphi\leftrightarrow \Ldiamond{\rho + (\sigma + \tau)}\varphi$ \\
            Absorption & $\Ldiamond{\tau \cdot (\tau + \sigma)}\varphi\leftrightarrow \Ldiamond{\tau}\varphi$ & $\Ldiamond{\tau + (\tau \cdot \sigma)}\varphi\leftrightarrow \Ldiamond{\tau}\varphi$ \\
        (STAR) & &\\
        FP & \multicolumn{2}{l}{$\Ldiamond{\tau+\sigma}\varphi\leftrightarrow (\varphi\vee (\Ldiamond{\tau}\Ldiamond{\tau+\sigma}\varphi \vee \Ldiamond{\sigma}\Ldiamond{\tau+\sigma}\varphi)) $} \\
        INDUC & \multicolumn{2}{l}{$\Lbox{\tau+\sigma}(\varphi\rightarrow (\Lbox{\tau}\varphi \wedge \Lbox{\sigma}\varphi))\rightarrow (\varphi\rightarrow \Lbox{\tau+\sigma}\varphi)$}\\
    \end{tabular}  
\end{definition}
When the axiom 5 is removed from $S5\mathsf{LCDK}$, the system is called $S4\mathsf{LCDK}$.

We define a preorder over terms based on the above axiom system:
$$\tau\leq \sigma\qquad \text{ iff}_{def}\qquad \vdash \Ldiamond{\tau}\varphi\rightarrow \Ldiamond{\sigma}\varphi$$
which will be quite handy in our proof of the next completeness theorem.
\begin{theorem}\label{thm:completenessStatic}
    $S5\mathsf{LCDK}$ is sound and weakly complete with repsect to the class of S5-regular frames. $S4\mathsf{LCDK}$ is sound and weakly complete with respect to the class of S4-regular frames.
\end{theorem}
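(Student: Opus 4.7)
For soundness I would check, routinely, that every axiom of $S5\mathsf{LCDK}$ (resp.\ $S4\mathsf{LCDK}$) is valid on all S5-regular (resp.\ S4-regular) frames. The S5/S4 schemes hold because the class of equivalence relations (resp.\ preorders) is closed under $\cap$ and $(\cdot\cup\cdot)^\ast$. The lattice axioms are exactly the identities expressing that $\cap$ and $(\cdot\cup\cdot)^\ast$ are meet and join in the lattice of such relations. FP captures the recursive equation $R_{\tau+\sigma} = \mathit{id}\cup R_\tau\circ R_{\tau+\sigma}\cup R_\sigma\circ R_{\tau+\sigma}$, and INDUC captures the least-reflexive-transitive-closure property. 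Modus ponens and necessitation preserve validity in the usual way.

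For weak completeness, given $\not\vdash\varphi_0$ I would build a finite regular model refuting $\varphi_0$ by a Fischer--Ladner-style construction. Define a finite closure $\Sigma$ containing all subformulas of $\varphi_0$, their single negations, and, for each $\Ldiamond{\tau+\sigma}\psi\in\Sigma$, the FP-unfoldings $\psi$, $\Ldiamond{\tau}\Ldiamond{\tau+\sigma}\psi$, $\Ldiamond{\sigma}\Ldiamond{\tau+\sigma}\psi$; use the preorder $\leq$ to keep only finitely many $\equiv$-representatives among the terms. Let $W$ be the set of $\Sigma$-atoms (maximal $\Sigma$-consistent sets) and set $V(p) := \{\Delta : p\in\Delta\}$. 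For each atomic $x$ occurring in $\Sigma$ put $\Delta R_x\Gamma$ iff $\{\psi : \Lbox{x}\psi\in\Delta\cap\Sigma\}\subseteq\Gamma$, and extend to compound terms by the regularity clauses $R_{\sigma\cdot\tau} := R_\sigma\cap R_\tau$ and $R_{\sigma+\tau} := (R_\sigma\cup R_\tau)^\ast$. By construction $\mathbf{M}$ is regular, and the S5 (resp.\ S4) axioms force each $R_x$ to be an equivalence (resp.\ reflexive and transitive); both properties are then inherited by every $R_\tau$.

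The heart of the argument is the truth lemma $\mathbf{M},\Delta\models\psi \Leftrightarrow \psi\in\Delta$ for $\psi\in\Sigma$, by induction on $\psi$. Boolean cases are standard. For $\Ldiamond{\tau}\psi$ I would run a sub-induction on $\tau$ steered by the preorder: since $\sigma\cdot\tau\leq\sigma,\tau$, the lattice axioms yield $\Lbox{\sigma}\phi\to\Lbox{\sigma\cdot\tau}\phi$ and $\Lbox{\tau}\phi\to\Lbox{\sigma\cdot\tau}\phi$, so any $\Sigma$-atom $\Gamma$ extending $\{\psi\}\cup\{\phi : \Lbox{\sigma\cdot\tau}\phi\in\Delta\}$ automatically lies in $R_\sigma(\Delta)\cap R_\tau(\Delta) = R_{\sigma\cdot\tau}(\Delta)$, supplying the existential witness. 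The hard case, and the main obstacle of the whole proof, is $\Ldiamond{\sigma+\tau}\psi$, where one must align the semantic condition ``there is a finite $(R_\sigma\cup R_\tau)$-path from $\Delta$ to some $\psi$-atom'' with $\Ldiamond{\sigma+\tau}\psi\in\Delta$. The forward direction (path $\Rightarrow$ membership) goes by induction on path length using FP. For the converse I would use the standard common-knowledge trick: let $Y$ collect the atoms from which no such $\psi$-path exists; form the characteristic disjunction $\varphi_Y := \bigvee_{\Gamma\in Y}\hat{\Gamma}$ where $\hat{\Gamma}$ is the conjunction of formulas in $\Gamma$; observe that $Y$ is closed under $R_\sigma$- and $R_\tau$-successors so that $\vdash \varphi_Y\to(\Lbox{\sigma}\varphi_Y\wedge\Lbox{\tau}\varphi_Y)$ and $\vdash\varphi_Y\to\neg\psi$; apply INDUC to obtain $\vdash\varphi_Y\to\Lbox{\sigma+\tau}\neg\psi$, which expels every atom containing $\Ldiamond{\sigma+\tau}\psi$ from $Y$. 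The delicate point throughout is bookkeeping $\Sigma$ and the interplay between $\cdot$- and $+$-terms so that every provably-equivalent modality invoked in these arguments actually stays inside $\Sigma$.
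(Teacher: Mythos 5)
Your soundness sketch and your treatment of the $+$ case (the FP/INDUC argument with the characteristic disjunction $\varphi_Y$) are in line with what the paper does. The genuine gap is in the $\cdot$ case, which you dismiss as routine but which is in fact the reason the paper abandons the plain finite canonical model and builds the model step by step instead. In your construction the compound relations are defined compositionally from the atomic canonical relations, so for the existence half of the truth lemma at $\Ldiamond{\sigma\cdot\tau}\psi\in\Delta$ you must exhibit a single atom $\Gamma$ containing $\psi$ that lies in $R_\sigma(\Delta)\cap R_\tau(\Delta)$ \emph{for the compositionally defined relations}. Extending $\set{\psi}\cup\set{\phi:\Lbox{\sigma\cdot\tau}\phi\in\Delta}$ to an atom only guarantees that $\Gamma$ contains everything forced by $\Lbox{\sigma}$ and $\Lbox{\tau}$ at $\Delta$; it does not put $(\Delta,\Gamma)$ into $R_\sigma$ when $\sigma$ is itself compound --- in particular, when $\sigma=y+z$, membership in $R_\sigma(\Delta)=(R_y\cup R_z)^\ast(\Delta)$ requires an actual finite path to $\Gamma$, which no consistency argument about a single pair supplies. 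The converse half fails too: two atoms can be related by both $R_\sigma$ and $R_\tau$ without $\hat\Delta\wedge\Ldiamond{\sigma\cdot\tau}\hat\Gamma$ being consistent (intersection is not modally definable), so ``truth implies membership'' breaks for $\Ldiamond{\sigma\cdot\tau}$. The paper records exactly this failure: in its transitive finite canonical model only $(S_{\sigma\cdot\tau})^t\subseteq (S_\sigma)^t\cap(S_\tau)^t$ holds, not the converse.

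The paper's remedy, which your proposal is missing entirely, is its Step (5): a step-by-step construction of a network labelled by atoms, with coherence conditions (C0)--(C3), saturation conditions (S1)--(S2), and a Repair Lemma showing every defect of a finite coherent network can be fixed while preserving coherence. The crucial ingredient is the ``foundational bridge'' condition (C3): for any two connected nodes $u,v$ there is a term $\sigma$ such that the set of terms $\tau$ admitting a $\tau$-walk from $u$ to $v$ is exactly the up-set $\uparrow\sigma$ in the term preorder. After closing the relations reflexively and transitively, this yields $E_\tau\cap E_{\tau'}\subseteq E_{\tau\cdot\tau'}$ (the bridge $\sigma$ lies below both $\tau$ and $\tau'$, hence below $\tau\cdot\tau'$), which is exactly the inclusion your construction cannot deliver. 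Without something playing the role of (C3) and the repair machinery that preserves it, your truth lemma does not close.
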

The proof makes use of the method of step by step, the details of which can be found in the appendix.

With the help of this sound and complete logic, the question in Exapmle \ref{exap:intro} can be easily answered. To see this, it is enough to realize that the formal counterpart of the question in our logic is whether $\vdash \Lbox{(b+c)\cdot a}\varphi \rightarrow  \Lbox{b\cdot a + c\cdot a}\varphi$ is the case. By basics of lattice theory, $b\cdot a + c\cdot a \leq (b+c)\cdot a$ is the case.  So the answer is yes. 

\section{Semi-public Reading Events}

In this section, we turn to a type of dynamic update proposed in \cite{AS2023LPAR} and demonstrate its natural generalization in our new logic. We also show how Example \ref{exap:intro} can be formalized in this dynamic setting. 

As \cite{AS2023LPAR} puts it
\begin{quote}
    We call all these actions \emph{semi-public `reading' events}. In all of them, some agents get to access (`read') some other agents' knowledge base(s). But \emph{the fact that this access is gained (or not)} is public: it is common knowledge who can `read' whose knowledge base during these events. 
\end{quote}
A semi-public `reading' event is represented by a reading map.
\begin{quote}
    A \emph{reading map} is a function $\alpha: A\rightarrow \wp(A)$, mapping agents $a\in \mathsf{A}$ to sets of agents $\alpha(a)\subseteq \mathsf{AT}$, subject to the constraint that 
    $$a\in \alpha(a)\quad (\text{for every } a\in A).$$
    Intuitively, $\alpha(a)$ is \emph{the set of agents whose information is accessed by} $a$ during this action. So this last constraint means that \emph{every agent $a$ can always re-read her own knowledge base}.     
\end{quote}

In our logic, the reading map can be generalized as follows
$$\beta: \mathsf{AT}\rightarrow \mathsf{T}\enspace .$$
We could have required that 
$$\beta(a) \leq a$$
to make sure that every agent $a$ can always re-read her own knowledge base. However, to stay as general as possible, we refrain from this extra condition.
According to generalized reading maps, agents can read not only other agents' knowledge but also other more complex forms of knowledge, for example, a group's common knowledge.

Given a reading map $\beta$, let $!\beta$ denote a semi-public event  corresponding to it. Given any S5/S4-regular model $\mathbf{M} = (W,\set{R_\tau\mid \tau\in \mathsf{T}},V)$, the event $!\beta$ returns an updated S5/S4-regular model $\mathbf{M}^{!\beta} = (W,\set{R^{!\beta}_\tau\mid \tau\in \mathsf{T}},V)$ which has the same set of possible states $W$, the same valuation function $V$ but has new equivalence relations/preorders $R^{!\beta}_\tau$  given by 
$$R^{!\beta}_\tau:= R_{\beta'(\tau)}\enspace .$$
where $\beta'$ is a function mapping terms in $\mathsf{T}$ to terms in $\mathsf{T}$ which satisfies the following conditions:
\begin{enumerate}
    \item $\beta'(x) = \beta(x)$ for $x\in \mathsf{AT}$;
    \item $\beta'(\tau\cdot\sigma) = \beta'(\tau)\cdot \beta'(\sigma)$;
    \item $\beta'(\tau+\sigma) = \beta'(\tau) + \beta'(\sigma)$. 
\end{enumerate}  
So $\beta'$ is a homomorphism from $\mathsf{T}$ to $\mathsf{T}$ which extends $\beta$ on $\mathsf{AT}$. By this fact, it implies that the updated model is indeed S5/S4-regular.

For each generalized rading map $\beta$, we add dynamic modalities $[!\beta]\varphi$  into the syntax of our logic (resulting in a new language $\mathcal{L}_!$) and evaluate them at a state as follows:
$$\mathbf{M},w\models  [!\beta]\varphi\quad \text{iff}\quad \mathbf{M}^{!\beta},w\models \varphi\enspace .$$

All the above definitions in this section are straightforward generalization of their counterpart in \cite{AS2023LPAR}. They also result in quite straightforward reduction laws:
\begin{itemize}
    \item $\Lbox{!\beta}p \leftrightarrow p$;
    \item $\Lbox{!\beta}\neg\varphi \leftrightarrow \neg\Lbox{!\beta}\varphi$;
    \item $\Lbox{!\beta}(\varphi\wedge \psi)\leftrightarrow \Lbox{!\beta}\varphi\wedge \Lbox{!\beta}\psi$;
    \item $\Lbox{!\beta}\Lbox{\tau}\varphi\leftrightarrow \Lbox{\beta'(\tau)}\Lbox{!\beta}\varphi$.
\end{itemize}

Putting the above reduction laws together with the axiom system $S5\mathsf{LCDK}$/$S4\mathsf{LCDK}$, we get the axiom system for the dynamic logic of semi-public reading, $S5\mathsf{LCDK}!$/$S4\mathsf{LCDK}!$. 
\begin{theorem}
     $S5\mathsf{LCDK}!$/$S4\mathsf{LCDK}!$ is sound and complete with respect to the class of S5/S4-regular frames. 
\end{theorem}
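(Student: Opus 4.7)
The plan is to follow the standard reduction-law strategy for dynamic epistemic logics (c.f.\ \cite{hansWB2007dynamic}): verify the soundness of each of the four reduction axioms, use them to translate every formula of $\mathcal{L}_!$ into a provably equivalent formula of the static language $\mathcal{L}$, and then reduce completeness of $S5\mathsf{LCDK}!$/$S4\mathsf{LCDK}!$ to the completeness of $S5\mathsf{LCDK}$/$S4\mathsf{LCDK}$ established in Theorem~\ref{thm:completenessStatic}.

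For soundness I would check each reduction axiom directly against the semantics. The first three are routine because the action $!\beta$ leaves $W$ and $V$ untouched and truth-conditions for the Boolean connectives commute with the model update. For the fourth, $\Lbox{!\beta}\Lbox{\tau}\varphi \leftrightarrow \Lbox{\beta'(\tau)}\Lbox{!\beta}\varphi$, unfolding the left-hand side gives $\mathbf{M}^{!\beta},w\models\Lbox{\tau}\varphi$, which by the defining equation $R^{!\beta}_\tau := R_{\beta'(\tau)}$ says that every $R_{\beta'(\tau)}$-successor of $w$ satisfies $\varphi$ in $\mathbf{M}^{!\beta}$, and this is exactly the content of the right-hand side. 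The fact, already noted in the excerpt, that $\beta'$ is a homomorphism on $\mathsf{T}$ guarantees that $\mathbf{M}^{!\beta}$ is again S5/S4-regular, so soundness of the remaining (static) axioms of $S5\mathsf{LCDK}$/$S4\mathsf{LCDK}$ on $\mathbf{M}^{!\beta}$ is inherited from Theorem~\ref{thm:completenessStatic}.

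For completeness I would define a translation $t:\mathcal{L}_!\to\mathcal{L}$ by recursively reading each reduction axiom from left to right: $\Lbox{!\beta}$ is erased at atomic formulas, pushed past Boolean connectives, and pushed past each static modality $\Lbox{\tau}$ while the term is simultaneously replaced by $\beta'(\tau)$; iterated dynamic operators are handled by first translating the innermost $\Lbox{!\beta}$-subformula and then continuing outward. Because every rewriting step corresponds to an instance of a reduction axiom, an easy induction yields $\vdash \varphi\leftrightarrow t(\varphi)$ for every $\varphi\in\mathcal{L}_!$. Completeness then follows by the standard argument: any $\varphi\in\mathcal{L}_!$ valid on the class of S5/S4-regular frames yields a valid $t(\varphi)\in\mathcal{L}$, which is provable in $S5\mathsf{LCDK}$/$S4\mathsf{LCDK}$ by Theorem~\ref{thm:completenessStatic}, so $\vdash \varphi$.

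The one non-routine ingredient, which I expect to be the main obstacle, is showing that the recursive translation actually terminates. The problematic clause is $\Lbox{!\beta}\Lbox{\tau}\varphi \leftrightarrow \Lbox{\beta'(\tau)}\Lbox{!\beta}\varphi$: since $\beta$ may send an atomic agent to an arbitrary complex term of $\mathsf{T}$, the right-hand term $\beta'(\tau)$ can be syntactically \emph{larger} than $\tau$, so no naive subformula measure decreases. The standard remedy is a weighted complexity measure $c:\mathcal{L}_!\to\mathbb{N}$ that is insensitive to the internal size of terms sitting inside static modalities but strictly dominates any one-step push of $\Lbox{!\beta}$ inward, for instance along the lines of $c(p)=1$, $c(\neg\varphi)=c(\varphi)+1$, $c(\varphi\vee\psi)=c(\varphi)+c(\psi)+1$, $c(\Lbox{\tau}\varphi)=c(\varphi)+1$ (independent of $\tau$), and $c(\Lbox{!\beta}\varphi)=k\cdot c(\varphi)$ for a fixed constant $k\geq 2$. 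One then verifies that for each of the four reduction axioms the left-hand side has strictly greater $c$-value than the right-hand side, which makes the rewriting well-founded and ensures that $t$ is total.
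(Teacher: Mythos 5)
Your proposal is correct and follows essentially the same route as the paper, which disposes of this theorem in a one-sentence sketch: soundness of the four reduction laws, provable equivalence of every $\mathcal{L}_!$-formula with a static $\mathcal{L}$-formula, and transfer of completeness from Theorem~\ref{thm:completenessStatic}. Your extra care about termination of the translation --- using a complexity measure with $c(\Lbox{\tau}\varphi)$ independent of the size of $\tau$, since $\beta'(\tau)$ may be larger than $\tau$ --- addresses a genuine point that the paper silently omits, and is exactly the right fix.
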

The soundness is easy to check, from which  we get the equal expressivity of $\mathcal{L}$ and $\mathcal{L}_!$ with respect to the class of regular models. So the completeness of $S5\mathsf{LCDK}!$ and $S4\mathsf{LCDK}!$ follows from the completeness of $S5\mathsf{LCDK}$ and $S4\mathsf{LCDK}$ respectively.

In the dynamic logic of semi-public reading, we can formalize the scenario in Example \ref{exap:intro}. Let $a,b,c\in \mathsf{AT}$. Consider a generalized reading map $\beta$ satisfying $\beta(a) = a\cdot (b+c)$ (Alice semi-publicly learns what Bob and Carter commonly know), $\beta(b) = b\cdot a$ (Bob semi-publicly learns what Alice knows) and $\beta(c) = c\cdot a$ (Carter semi-publicly learns what Alice knows). It does not matter what the value of $\beta(x)$ for $x\neq a,b,c$ is. The question is  whether $\Lbox{!\beta}(\Lbox{a}\varphi \rightarrow \Lbox{b+c}\varphi)$ is derivable in $S5\mathsf{LCDK}$ or $S4\mathsf{LCDK}$.
Applying the reduction laws, we can see that $\Lbox{!\beta}(\Lbox{a}\varphi \rightarrow \Lbox{b+c}\varphi)$ is equivalent to 
$$\Lbox{a\cdot (b+c)}\Lbox{!\beta}\varphi \rightarrow \Lbox{b\cdot a+c\cdot a}[!\beta]\varphi$$ 
which brings us back to the end of Section \ref{sec:SSA}.

Speaking more generally, the connection between lattice theory and epistemic logic is key to the work in this paper. In the next section, we demonstrate the benefit of this connection by modeling epistemic update by arbitrary reading events.

\section{Arbitrary reading events}

In a semi-public reading event, an agent's action of reading others' knowledge is public while what she learns is not. In an arbitrary reading event, the action of reading others' knowledge is not necesarily public either.

The uncertainty of an agent about other agents' reading actions can be modelled by epistemic indistinguishability relation between generalized reading maps. Generalizing the definition of \emph{reading  event model} in \cite{AS2023LPAR}, let a \emph{S5 regular reading event model} be a structure $\mathbf{E} = (E,\set{R_\tau\mid \tau\in T},S)$ where $E$ is a finite set of `events', $S: E\rightarrow F(\mathsf{AT},\mathsf{T})$ is a function mapping events $e$ in $E$ to a generlized reading map $S(e)\in  F(\mathsf{AT},\mathsf{T})$. 
$R_\tau$ are equivalence relations on $E$ which satisfy the two conditions of regularity as in a regular model and the following condition:
$$eR_x f \text{ implies } [S(e)](x) \equiv [S(f)](x)\enspace .$$ 
The symbol $\equiv$ is the abbreviation of $\leq \cap \geq$ where $\leq$ is the preorder we define over the set of terms $\mathsf{T}$ at the end of Section \ref{sec:SSA}. 

The epistemic update of a regular model $\mathbf{M} = (W,\set{R^\mathbf{M}_\tau\mid \tau\in \mathsf{T}},V^\mathbf{M})$ by a S5 regular reading event model $\mathbf{E} = (E,\set{R^\mathbf{E}_\tau\mid \tau\in T},S)$ is defined by a straightforward generalization of the product update proposed in \cite{AS2023LPAR}. That is, $\mathbf{M}\otimes \mathbf{E} = (S\times E, \set{R^{\mathbf{M}\otimes \mathbf{E}}_\tau\mid \tau\in \mathsf{T}}, V^{\mathbf{M}\otimes\mathbf{E}})$,  where 
\begin{itemize}
    \item $S\times E = \set{(w,e)\mid w\in W,e\in E}$
    \item $V^{\mathbf{M}\otimes\mathbf{E}} = V^\mathbf{M}$
    \item $R^{\mathbf{M}\otimes \mathbf{E}}_\tau$ is defined recursively :
    \begin{itemize}
        \item $(w,e)R^{\mathbf{M}\otimes \mathbf{E}}_x (w',e')$ iff $wR^\mathbf{M}_{[S(e)](x)} w'$ and $eR^\mathbf{E}_x e'$
        \item $R^{\mathbf{M}\otimes \mathbf{E}}_{\sigma_1\cdot\sigma_2} := R^{\mathbf{M}\otimes \mathbf{E}}_{\sigma_1}\cap R^{\mathbf{M}\otimes \mathbf{E}}_{\sigma_2}$
        \item $R^{\mathbf{M}\otimes \mathbf{E}}_{\sigma_1+\sigma_2} := (R^{\mathbf{M}\otimes \mathbf{E}}_{\sigma_1}\cup R^{\mathbf{M}\otimes \mathbf{E}}_{\sigma_2})^\ast$
    \end{itemize}
\end{itemize}

The following characterization of $R^{\mathbf{M}\otimes \mathbf{E}}_{\tau}$ will play a key role in finding reduction laws and proving their validity.
\begin{proposition}\label{prop:updatedRelation}
    Given an updated model $\mathbf{M}\otimes \mathbf{E} = (W\times E, \set{R^{\mathbf{M}\otimes \mathbf{E}}_\tau\mid \tau\in \mathsf{T}})$ where $\mathbf{E}$ is an S5 regular reading event model and $\mathbf{M}$ is S5/S4 regular model, let $\gamma: E\times \mathsf{T}\rightarrow \mathsf{T}$ be a function defined as follows:
    \begin{itemize}
        \item $\gamma(e,x) = [S(e)](x)$
        \item $\gamma(e,\sigma_1\cdot \sigma_2) = \gamma(e,\sigma_1)\cdot \gamma(e,\sigma_2)$
        \item $\gamma(e,\sigma_1+\sigma_2) = \sum_{eR^\mathbf{E}_{\sigma_1+\sigma_2}f}(\gamma(f,\sigma_1)+\gamma(f,\sigma_2))$
    \end{itemize}
    We have the following characterization of $R^{\mathbf{M}\otimes \mathbf{E}}_{\tau}$ in terms of $R^{\mathbf{M}}_{\gamma(e,\tau)}$ and  $R^{\mathbf{E}}_{\tau}$:
    $$(w,e)R^{\mathbf{M}\otimes \mathbf{E}}_{\tau}(w',e')\quad \text{iff}\quad wR^{\mathbf{M}}_{\gamma(e,\tau)}w' \text{ and } eR^{\mathbf{E}}_{\tau} e'$$
\end{proposition}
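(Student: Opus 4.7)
The plan is to proceed by induction on the structure of $\tau$. The base case $\tau = x \in \mathsf{AT}$ is immediate from the definitions of $R^{\mathbf{M}\otimes\mathbf{E}}_x$ and $\gamma(e,x)$. The meet case $\tau = \sigma_1 \cdot \sigma_2$ is also routine: both $R^{\mathbf{M}\otimes\mathbf{E}}_{\sigma_1\cdot\sigma_2}$ and $R^\mathbf{M}_{\gamma(e,\sigma_1\cdot\sigma_2)} = R^\mathbf{M}_{\gamma(e,\sigma_1)\cdot\gamma(e,\sigma_2)}$ unfold as intersections by regularity, so the two component biconditionals supplied by the inductive hypothesis combine straightforwardly into the required one.

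The join case $\tau = \sigma_1 + \sigma_2$ is where the real work lies, and both directions lean on the soundness side of Theorem \ref{thm:completenessStatic}, so that the syntactic preorder $\leq$ on terms implies semantic inclusion of the corresponding relations in every regular model. For the forward direction, I would unfold $(w,e) R^{\mathbf{M}\otimes\mathbf{E}}_{\sigma_1+\sigma_2} (w',e')$ as a finite path $(w_0,e_0),\ldots,(w_n,e_n)$ whose consecutive edges use $R^{\mathbf{M}\otimes\mathbf{E}}_{\sigma_1}$ or $R^{\mathbf{M}\otimes\mathbf{E}}_{\sigma_2}$. The inductive hypothesis turns each edge into a pair of edges, one in $\mathbf{M}$ labeled $\gamma(e_i,\sigma_{j_i})$ and one in $\mathbf{E}$ labeled $\sigma_{j_i}$. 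The event-side composes to give $eR^\mathbf{E}_{\sigma_1+\sigma_2}e'$ immediately. For the world-side, the key observation is that every intermediate event $e_i$ satisfies $eR^\mathbf{E}_{\sigma_1+\sigma_2}e_i$, so each $\gamma(e_i,\sigma_{j_i})$ appears as a summand in the expression defining $\gamma(e,\sigma_1+\sigma_2)$; by the soundness remark above, $R^\mathbf{M}_{\gamma(e_i,\sigma_{j_i})}\subseteq R^\mathbf{M}_{\gamma(e,\sigma_1+\sigma_2)}$, and transitivity of the latter yields $wR^\mathbf{M}_{\gamma(e,\sigma_1+\sigma_2)}w'$.

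The backward direction is the main obstacle. Given $wR^\mathbf{M}_{\gamma(e,\sigma_1+\sigma_2)}w'$ and $eR^\mathbf{E}_{\sigma_1+\sigma_2}e'$, I would unfold the first as a path $w=u_0,\ldots,u_n=w'$ in which each edge has the form $u_iR^\mathbf{M}_{\gamma(f_i,\sigma_{j_i})}u_{i+1}$ for some $f_i$ with $eR^\mathbf{E}_{\sigma_1+\sigma_2}f_i$. By the inductive hypothesis, taking the same $f_i$ at both endpoints (permitted by reflexivity of $R^\mathbf{E}_{\sigma_{j_i}}$), each such edge lifts to $(u_i,f_i)R^{\mathbf{M}\otimes\mathbf{E}}_{\sigma_{j_i}}(u_{i+1},f_i)$. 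The remaining task is to stitch these vertical steps into one product-path from $(w,e)$ to $(w',e')$. For this I would interleave them with horizontal stretches that change only the event coordinate: between $(u_i,f_{i-1})$ and $(u_i,f_i)$, use that $f_{i-1}R^\mathbf{E}_{\sigma_1+\sigma_2}f_i$ (since both lie in the $R^\mathbf{E}_{\sigma_1+\sigma_2}$-class of $e$) to obtain a path in $\mathbf{E}$ whose individual edges each lift to product-edges at the fixed world $u_i$ via reflexivity of the $\mathbf{M}$-relations; analogous horizontal stretches connect $(w,e)$ to $(u_0,f_0)$ and $(u_n,f_{n-1})$ to $(w',e')$.

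The technical bookkeeping that needs care is ensuring the finite sum $\sum_{eR^\mathbf{E}_{\sigma_1+\sigma_2}f}(\gamma(f,\sigma_1)+\gamma(f,\sigma_2))$ is well-defined (guaranteed by finiteness of $E$) and that the two senses of ``$+$ as reflexive-transitive closure of union'' align correctly, one inside the inductive hypothesis and one in the target statement. Once that alignment is made explicit via the lattice axioms, the weaving construction goes through mechanically and closes the induction.
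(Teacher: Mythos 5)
Your proposal is correct and follows essentially the same route as the paper's own proof: induction on $\tau$, with the join case handled in the forward direction by decomposing the product-relation path and absorbing each $\gamma(e_i,\sigma_{j_i})$-edge into $R^{\mathbf{M}}_{\gamma(e,\sigma_1+\sigma_2)}$ via the lattice ordering, and in the backward direction by lifting each $\mathbf{M}$-edge to a product edge with a fixed event coordinate and interleaving these with event-only stretches justified by reflexivity of the $\mathbf{M}$-relations and the fact that all the $f_i$ lie in the $R^{\mathbf{E}}_{\sigma_1+\sigma_2}$-class of $e$. Your explicit remarks on the inclusion $R^{\mathbf{M}}_{\gamma(e_i,\sigma_{j_i})}\subseteq R^{\mathbf{M}}_{\gamma(e,\sigma_1+\sigma_2)}$ and on finiteness of $E$ make precise points the paper leaves implicit, but the argument is the same.
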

\begin{proof}
    
    We prove by induction on the structure of $\tau$. The basic case follows directly from the definition of $R^{\mathbf{M}\otimes \mathbf{E}}_{x}$. The case for $\sigma_1\cdot \sigma_2$ follows directly from the definition of $R^{\mathbf{M}\otimes \mathbf{E}}_{\sigma_1\cdot\sigma_2}$, the inductive hypothesis and the regularity of $\mathbf{M}$ and $\mathbf{E}$.

    Now consider the case where $\tau$ is of the form $\sigma_1+\sigma_2$. 
    
    Assume that  $(w,e)R^{\mathbf{M}\otimes \mathbf{E}}_{\sigma_1+\sigma_2}(w',e')$. There is a path 
    $$(w,e)R^{\mathbf{M}\otimes \mathbf{E}}_{\tau_1}(w_1,e_1)R^{\mathbf{M}\otimes \mathbf{E}}_{\tau_2}\ldots R^{\mathbf{M}\otimes \mathbf{E}}_{\tau_n}(w_n,e_n)R^{\mathbf{M}\otimes \mathbf{E}}_{\tau_{n+1}}(w',e')\enspace .$$ 
    Each $R^{\mathbf{M}\otimes \mathbf{E}}_{\tau_i}$ in the path is either $R^{\mathbf{M}\otimes \mathbf{E}}_{\sigma_1}$ or $R^{\mathbf{M}\otimes \mathbf{E}}_{\sigma_2}$. So it follows from inductive hypothesis that 
    $$eR^{\mathbf{E}}_{\tau_1}e_1R^{\mathbf{E}}_{\tau_2}\ldots R^{\mathbf{E}}_{\tau_n}e_nR^{\mathbf{E}}_{\tau_{n+1}}e'\quad \text{and}\quad  wR^{\mathbf{M}}_{\gamma(e,\tau_1)}w_1R^{\mathbf{M}}_{\gamma(e_1,\tau_2)}\ldots R^{\mathbf{M}}_{\gamma(e_{n-1},\tau_n)}w_nR^{\mathbf{M}}_{\gamma(e_n,\tau_{n+1})}w'\enspace. $$ 
    Each $\tau_i$ in $R^{\mathbf{M}\otimes \mathbf{E}}_{\tau_i}$ is either $\sigma_1$ or $\sigma_2$. So $e R^{\mathbf{E}}_{\sigma_1+\sigma_2} e'$ and $w R^\mathbf{M}_{\sum_{eR^\mathbf{E}_{\sigma_1+\sigma_2}f}(\gamma(f,\sigma_1)+\gamma(f,\sigma_2))}w'$.

    Assume that $e R^{\mathbf{E}}_{\sigma_1+\sigma_2} e'$ and $w R^\mathbf{M}_{\sum_{eR^\mathbf{E}_{\sigma_1+\sigma_2}f}(\gamma(f,\sigma_1)+\gamma(f,\sigma_2))}w'$. So there is a path 
    $$\bigstar \qquad wR^{\mathbf{M}}_{\gamma(f_0,\tau_1)}w_1R^{\mathbf{M}}_{\gamma(f_1,\tau_2)}\ldots R^{\mathbf{M}}_{\gamma(f_{n-1},\tau_n)}w_nR^{\mathbf{M}}_{\gamma(f_n,\tau_{n+1})}w'\enspace. $$ 
    Since $eR^\mathbf{E}_{\sigma_1+\sigma_2}f_i$ for all $f_i$ in the path, $e R^{\mathbf{E}}_{\sigma_1+\sigma_2} e'$ and $R^\mathbf{E}_{\sigma_1+\sigma_2}$ is an equivalence relation, there is a path from $f_i$ to $f_{i+1}$ (when $i=n$, $f_{i+1} = e'$):
    $$f_iR^\mathbf{E}_{\tau_{i+1}} f_i R^\mathbf{E}_{\sigma^i_1} e^i_1 R^\mathbf{E}_{\sigma^i_2}\ldots e^i_{m(i)}R^\mathbf{E}_{\sigma^i_{m(i)+1}}f_{i+1}\enspace .$$

    Each  step is $R^\mathbf{E}_{\sigma^i_k}$ where $\sigma^i_k$ is either $\sigma_1$ or $\sigma_2$.
    Now, we can expand each step $w_iR^{\mathbf{M}}_{\gamma(f_i,\tau_{i+1})}w_{i+1}$ in the path $\bigstar$
    to the following one (when $i=n$, $w_{i+1} = w'$):
    $$w_iR^{\mathbf{M}}_{\gamma(f_i,\tau_{i+1})}w_{i+1} R^{\mathbf{M}}_{\gamma(f_i,\sigma^i_1)}w_{i+1}R^{\mathbf{M}}_{\gamma(e^i_1,\sigma^i_2)}w_{i+1}\ldots w_{i+1}R^{\mathbf{M}}_{\gamma(e^i_{m(i)},\sigma^i_{m(i)+1})}w_{i+1} \enspace .$$ 
    So we get a new path from $w$ to $w'$. Each step is $R^{\mathbf{M}}_{\gamma(e^i_k,\sigma^i_k)}$ or $R^{\mathbf{M}}_{\gamma(f_i,\tau_{i+1})}$ where $\sigma^i_k$ and $\tau_{i+1}$ are either $\sigma_1$ or $\sigma_2$. 
    By the inductive hypothesis, we have
    $$(w_i,f_i)R^{\mathbf{M}\otimes \mathbf{E}}_{\tau_{i+1}} (w_{i+1},f_i) R^{\mathbf{M}\otimes \mathbf{E}}_{\sigma^i_1} (w_{i+1},e^i_1)R^{\mathbf{M}\otimes \mathbf{E}}_{\sigma^i_2}(w_{i+1},e^i_2)\ldots R^{\mathbf{M}\otimes \mathbf{E}}_{\sigma^i_{m(i)}}(w_{i+1},e^i_{m(i)})R^{\mathbf{M}\otimes \mathbf{E}}_{\sigma^i_{m(i)+1}}(w_{i+1},f_{i+1})$$   
    which implies that $(w_i,f_i)R^{\mathbf{M}\otimes \mathbf{E}}_{\sigma_1+\sigma_2}(w_{i+1},f_{i+i})$
    by the definition of $R^{\mathbf{M}\otimes \mathbf{E}}_{\sigma_1+\sigma_2}$. Therefore, we have
    $$(w,f_0)R^{\mathbf{M}\otimes \mathbf{E}}_{\sigma_1+\sigma_2}(w_1,f_1)R^{\mathbf{M}\otimes \mathbf{E}}_{\sigma_1+\sigma_2}\ldots R^{\mathbf{M}\otimes \mathbf{E}}_{\sigma_1+\sigma_2}(w_n,f_n)R^{\mathbf{M}\otimes \mathbf{E}}_{\sigma_1+\sigma_2}(w',e')\enspace .$$    
    Note that $f_0$ is not necessarily $e$. However, there is a path from $e$ to $f_0$ in $\mathbf{E}$ via either $R^\mathbf{E}_{\sigma_1}$ or $R^\mathbf{E}_{\sigma_2}$:
    $$e R^\mathbf{E}_{\sigma^\ast_1} e^\ast_1 R^\mathbf{E}_{\sigma^\ast_2}\ldots e^\ast_{m(\ast)}R^\mathbf{E}_{\sigma^\ast_{m(\ast)+1}}f_0\enspace $$
    and a path from $w$ to $w$   
    $$wR^{\mathbf{M}}_{\gamma(e,\sigma^\ast_1)}wR^{\mathbf{M}}_{\gamma(e^\ast_1,\sigma_{\ast_2})}w\ldots wR^{\mathbf{M}}_{\gamma(e^\ast_{m(\ast)},\sigma^\ast_{m(\ast)+1})}w \enspace. $$ 
    So by inductive hypothesis and the definition of $R^{\mathbf{M}\otimes \mathbf{E}}_{\sigma_1+\sigma_2}$, we have 
    $$(w,e)R^{\mathbf{M}\otimes \mathbf{E}}_{\sigma_1+\sigma_2}(w,f_0)\enspace .$$
    Therefore, $(w,e)R^{\mathbf{M}\otimes \mathbf{E}}_{\sigma_1+\sigma_2}(w',e')$.
\end{proof}

To the language $\mathcal{L}$, we add dynamic operators $\Lbox{\mathbf{E},e}\varphi$ where $\mathbf{E}$ is a finite S5 regular reading event model and $e\in \mathbf{E}$. The new language is denoted by $\mathcal{L}_{\mathfrak{E}5}$ where $\mathfrak{E}5$ denotes the class of all finite S5-regular reading event models. The truth condition of $\Lbox{\mathbf{E},e}\varphi$ in a regular model $\mathbf{M}$ is given by 
$$\mathbf{M},w\models \Lbox{\mathbf{E},e}\varphi\quad \text{iff}\quad \mathbf{M}\otimes \mathbf{E},(w,e)\models \varphi\enspace .$$

The reduction laws for $\mathcal{L}_\mathfrak{E}$ are the following ones:
\begin{itemize}
    \item $\Lbox{\mathbf{E},e}p\leftrightarrow p$
    \item $\Lbox{\mathbf{E},e}\neg\varphi\leftrightarrow \neg\Lbox{\mathbf{E},e}\varphi$
    \item $\Lbox{\mathbf{E},e}(\varphi\wedge\psi)\leftrightarrow (\Lbox{\mathbf{E},e}\varphi\wedge \Lbox{\mathbf{E},e}\psi)$
    \item $\Lbox{\mathbf{E},e}\Lbox{\tau}\varphi\leftrightarrow \bigwedge_{eR^\mathbf{E}_\tau e'}\Lbox{\gamma(e,\tau)}\Lbox{\mathbf{E},e'}\varphi$
\end{itemize}
Where $\gamma$ is the function defined in Proposition \ref{prop:updatedRelation}.
\begin{proposition}
    All above reduction axioms are valid in the class of S4/S5-regular frames.
\end{proposition}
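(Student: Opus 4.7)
The plan is to verify each of the four reduction axioms directly from the truth condition of $\Lbox{\mathbf{E},e}\varphi$, leaning on Proposition~\ref{prop:updatedRelation} for the modal clause. The first three axioms are essentially book-keeping: since the product update $\mathbf{M}\otimes\mathbf{E}$ inherits the valuation $V^{\mathbf{M}\otimes\mathbf{E}}=V^{\mathbf{M}}$ unchanged, the atomic clause $\Lbox{\mathbf{E},e}p\leftrightarrow p$ is immediate; the Boolean clauses for negation and conjunction follow by unfolding the semantic clause for $\Lbox{\mathbf{E},e}$ and commuting it past $\neg$ and $\wedge$ exactly as in standard reduction-law arguments for dynamic epistemic logic. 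I would dispatch these in one short paragraph.

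The real content sits in the modal clause $\Lbox{\mathbf{E},e}\Lbox{\tau}\varphi\leftrightarrow \bigwedge_{eR^{\mathbf{E}}_\tau e'}\Lbox{\gamma(e,\tau)}\Lbox{\mathbf{E},e'}\varphi$. My plan is to start from the left-hand side and chase the definitions: $\mathbf{M},w\models \Lbox{\mathbf{E},e}\Lbox{\tau}\varphi$ iff $\mathbf{M}\otimes\mathbf{E},(w,e)\models \Lbox{\tau}\varphi$, which by the truth clause for $\Lbox{\tau}$ is equivalent to saying that for every $(w',e')$ with $(w,e)R^{\mathbf{M}\otimes\mathbf{E}}_{\tau}(w',e')$ we have $\mathbf{M}\otimes\mathbf{E},(w',e')\models \varphi$. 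Here I invoke Proposition~\ref{prop:updatedRelation}, which decomposes $R^{\mathbf{M}\otimes\mathbf{E}}_{\tau}$ into the conjunction of $R^{\mathbf{M}}_{\gamma(e,\tau)}$ on the first coordinate and $R^{\mathbf{E}}_{\tau}$ on the second.

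Once the relation is decomposed, the universal quantification over successors $(w',e')$ factorises: it becomes ``for every $e'$ with $eR^{\mathbf{E}}_{\tau} e'$ and every $w'$ with $wR^{\mathbf{M}}_{\gamma(e,\tau)} w'$, $\mathbf{M}\otimes\mathbf{E},(w',e')\models\varphi$.'' Pulling the quantifier over $e'$ outside and applying the semantic clauses for $\Lbox{\gamma(e,\tau)}$ and $\Lbox{\mathbf{E},e'}$ in turn, this reads exactly as $\mathbf{M},w\models \bigwedge_{eR^{\mathbf{E}}_\tau e'}\Lbox{\gamma(e,\tau)}\Lbox{\mathbf{E},e'}\varphi$. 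One subtle point to flag is the well-definedness of $\Lbox{\gamma(e,\tau)}$: because $\gamma(e,\sigma_1+\sigma_2)$ is defined as a finite sum ranging over the $R^{\mathbf{E}}_{\sigma_1+\sigma_2}$-class of $e$ (and $\mathbf{E}$ is finite), $\gamma(e,\tau)$ is a well-formed term of $\mathsf{T}$, so the right-hand side is a genuine $\mathcal{L}_{\mathfrak{E}5}$-formula.

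The main obstacle has already been overcome in Proposition~\ref{prop:updatedRelation}; once that characterisation is in hand, the proposition itself is a mechanical unfolding, and the same argument works uniformly for both S4-regular and S5-regular frames since Proposition~\ref{prop:updatedRelation} is stated for both. The only thing worth double-checking is that the updated model $\mathbf{M}\otimes\mathbf{E}$ is again S5/S4 regular (inheriting reflexivity, transitivity, and symmetry in the S5 case from the component-wise definition together with the constraint $eR^{\mathbf{E}}_x f \Rightarrow [S(e)](x)\equiv [S(f)](x)$), so that the induction hypothesis on $\varphi$ applies at $(w',e')$ without loss; I would note this explicitly before closing the argument.
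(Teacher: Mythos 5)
Your proposal is correct and follows essentially the same route as the paper: the first three laws are dismissed as routine, and the modal law is verified by unfolding the truth condition of $\Lbox{\mathbf{E},e}$ and using Proposition~\ref{prop:updatedRelation} to split $R^{\mathbf{M}\otimes\mathbf{E}}_{\tau}$ into its two coordinates so the quantifier over successors factorises. (One small remark: no induction on $\varphi$ is actually needed at the step you flag --- passing from $\mathbf{M}\otimes\mathbf{E},(w',e')\models\varphi$ to $\mathbf{M},w'\models\Lbox{\mathbf{E},e'}\varphi$ is just the truth clause for the dynamic modality --- but this does not affect the argument.)
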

\begin{proof}
    We only prove the validity of $\Lbox{\mathbf{E},e}\Lbox{\tau}\varphi\leftrightarrow \bigwedge\set{\Lbox{\gamma(e,\tau)}\Lbox{\mathbf{E},e'}\varphi\mid eR^\mathbf{E}_\tau e'}$. The validity of other reduction laws are trivial.

    Take an arbitrary regular model $\mathbf{M}$ and an arbitrary world $w$ in it. 

    $\mathbf{M},w\models \Lbox{\mathbf{E},e}\Lbox{\tau}\varphi$ iff $\mathbf{M}\otimes\mathbf{E},(w,e)\models \Lbox{\tau}\varphi$ iff $\forall (w',e')\in \mathbf{M}\otimes\mathbf{E}: (w,e)R^{\mathbf{M}\otimes\mathbf{E}}_\tau (w',e')\Rightarrow (w',e')\models \varphi$ iff (by Proposition \ref{prop:updatedRelation}) $\forall w'\in \mathbf{M},e'\in \mathbf{E}: wR^\mathbf{M}_{\gamma(e,\tau)} w'\, \&\, eR^\mathbf{E}_\tau e' \Rightarrow \mathbf{M}\otimes\mathbf{E},(w',e')\models \varphi$ iff $\forall w'\in \mathbf{M},e'\in \mathbf{E}: wR^\mathbf{M}_{\gamma(e,\tau)} w'\, \&\, eR^\mathbf{E}_\tau e' \Rightarrow \mathbf{M},w'\models \Lbox{\mathbf{E},e'}\varphi$ iff  $\forall e'\in \mathbf{E}: eR^\mathbf{E}_\tau e' \Rightarrow \forall w'\in \mathbf{M}\, (wR^\mathbf{M}_{\gamma(e,\tau)} w' \Rightarrow \mathbf{M},w'\models \Lbox{\mathbf{E},e'}\varphi)$ iff $\forall e'\in \mathbf{E}: eR^\mathbf{E}_\tau e' \Rightarrow \mathbf{M},w\models \Lbox{\gamma(e,\tau)}\Lbox{\mathbf{E},e'}\varphi$ iff $\mathbf{M},w\models \bigwedge_{eR^\mathbf{E}_\tau e'}\Lbox{\gamma(e,\tau)}\Lbox{\mathbf{E},e'}\varphi$
\end{proof}

Putting the above reduction laws together with the axiom system $S5\mathsf{LCDK}$/$S4\mathsf{LCDK}$, we get the axiom system for the dynamic logic of arbitrary reading, $S5\mathsf{LCDK}\mathfrak{E}$/$S4\mathsf{LCDK}\mathfrak{E}$. 
\begin{theorem}
     $S5\mathsf{LCDK}\mathfrak{E}$/$S4\mathsf{LCDK}\mathfrak{E}$ is sound and complete with respect to the class of S5/S4-regular frames. 
\end{theorem}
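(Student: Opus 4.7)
The plan is to follow the standard dynamic epistemic logic recipe: soundness reduces to checking each reduction axiom individually, and completeness follows by translating $\mathcal{L}_{\mathfrak{E}5}$-formulas into equivalent $\mathcal{L}$-formulas and invoking Theorem \ref{thm:completenessStatic}. Soundness of the critical reduction law $\Lbox{\mathbf{E},e}\Lbox{\tau}\varphi\leftrightarrow \bigwedge_{eR^\mathbf{E}_\tau e'}\Lbox{\gamma(e,\tau)}\Lbox{\mathbf{E},e'}\varphi$ is already the content of the preceding proposition, and the remaining reduction laws are immediate from the fact that the product update preserves the valuation and assigns a unique updated world to each pair $(w,e)$.

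For completeness I would define a translation $t: \mathcal{L}_{\mathfrak{E}5} \to \mathcal{L}$ that drives every dynamic modality $\Lbox{\mathbf{E},e}$ inward using the reduction laws until it either reaches an atomic formula (and disappears) or has been pushed past every static modality. This requires a complexity measure $c(\cdot)$ on $\mathcal{L}_{\mathfrak{E}5}$ such that (i) the right-hand side of every reduction axiom has strictly smaller complexity than the left-hand side and (ii) $c$ agrees with a standard static measure on dynamic-free formulas, so that the rewriting terminates. I would then show by induction on $c(\varphi)$ that $\vdash \varphi \leftrightarrow t(\varphi)$, substituting equivalents under modalities via Necessitation and K. Weak completeness transfers: an $\mathcal{L}_{\mathfrak{E}5}$-consistent $\varphi$ yields a static-consistent $t(\varphi)$, which Theorem \ref{thm:completenessStatic} satisfies in some S5/S4-regular model $\mathbf{M}$, and by the provable equivalence $\mathbf{M}$ also satisfies $\varphi$.

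The main obstacle I anticipate is designing $c$ so that the crucial reduction $\Lbox{\mathbf{E},e}\Lbox{\tau}\varphi \leadsto \bigwedge_{eR^\mathbf{E}_\tau e'}\Lbox{\gamma(e,\tau)}\Lbox{\mathbf{E},e'}\varphi$ strictly decreases it. The subtlety is that $\gamma(e,\tau)$ can be much larger than $\tau$ itself (the recursive clause for $\sigma_1+\sigma_2$ sums over all $R^\mathbf{E}_{\sigma_1+\sigma_2}$-equivalents of $e$), yet $\gamma(e,\tau)$ contains no dynamic modality and occurs only inside a single static box. This suggests weighting dynamic modalities much more heavily than term size; a workable choice is to let $c(\Lbox{\mathbf{E},e}\psi)$ be roughly $(|\mathbf{E}|+1)\cdot c(\psi)$, so that each conjunct $\Lbox{\gamma(e,\tau)}\Lbox{\mathbf{E},e'}\varphi$ on the right has a strictly smaller inner $\psi$ than $\Lbox{\tau}\varphi$, and the finite conjunction stays dominated. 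Finiteness of $\mathbf{E}$ is essential here. Once $c$ is pinned down, the rest is bookkeeping, and the argument simultaneously shows that $\mathcal{L}_{\mathfrak{E}5}$ and $\mathcal{L}$ are equally expressive over regular models, so completeness genuinely reduces to the static result.
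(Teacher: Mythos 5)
Your proposal is correct and follows essentially the same route as the paper: soundness from the validity of the reduction laws (the key case being the preceding proposition), and completeness by using those laws to translate every $\mathcal{L}_{\mathfrak{E}5}$-formula into a provably equivalent $\mathcal{L}$-formula and then invoking Theorem \ref{thm:completenessStatic}. The paper in fact states only this reduction in two sentences, so your discussion of the termination measure (and the observation that $\gamma(e,\tau)$ may blow up in term size but introduces no new dynamic modality, so finiteness of $\mathbf{E}$ suffices) supplies standard detail the paper leaves implicit.
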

The soundness follows from the validity of the reduction laws. By the validity of the reduction laws,  we also get the equal expressivity of $\mathcal{L}$ and $\mathcal{L}_{\mathfrak{E}5}$ with respect to the class of regular models. So the completeness of $S5\mathsf{LCDK}\mathfrak{E}$ and $S4\mathsf{LCDK}\mathfrak{E}$ follows from the completeness of $S5\mathsf{LCDK}$ and $S4\mathsf{LCDK}$ respectively.

\section{Conclusion and Future Work}

By viewing distributed knowledge and common knowledge as meet and join in a lattice respectively, we have shown that the epistemic logic with operators for distributed knowledge and common knowledge can be extended to express more involved types of group knowledge in the S4 and S5 cases. To characterize the epistemic update by arbitrary reading events in the extended logic $\mathsf{LCDK}$, we provide its reduction laws.  

The connection with lattice theory is key to the extension proposed in this paper. By attaching the epistemic interpretation to lattice theoretical notions, it becomes possible to deepen our understanding of epistemic notions by exploring lattice theoretical results. For example, what if we ask the lattice of distributed knowledge and common knowledge to be modular or even distributive? Some results in lattice theory can be found in \cite{Ore1942}.

Although this paper focuses on the epistemic interpretation of the lattice structre of the equivalence relations/reflexive and transitive relations, there are other possible interpretations. For example, questions in \cite{Minica2012}; functional dependence in \cite{Baltag2021}; priority structures in \cite{CHRISTOFF-GRATZL-ROY-2022}, and so on.  The lattice theoretical perspective of this paper can be applied to the other interpretations too.

\appendix
\section{Soundness and Completeness}

Soundness is easy to see, so we ignore the proof. The proof for weak completeness comes next. 

Steps (1) to (4) are adapted from Section 4.8 of \cite{Blackburn2001}.

\subsection{Step (1) Fischer-Ladner closure and atoms}

\begin{definition}[Fischer-Ladner closure]

    Let $X$ be a set of formulas. Then $X$ is FL-closed if it is closed under subformulas and satisfies the following additional constraints
    \begin{enumerate}
        \item If $\Ldiamond{\tau\cdot\sigma}\varphi\in X$, then $\Ldiamond{\tau}\varphi\wedge \Ldiamond{\sigma}\varphi\in X$;
        \item If $\Ldiamond{\tau +\sigma}\varphi\in X$, then $\Ldiamond{\tau}\Ldiamond{\tau +\sigma}\varphi, \Ldiamond{\sigma}\Ldiamond{\tau + \sigma}\varphi\in X$.
    \end{enumerate}
    If $\Sigma$ is any set of formulas, then $\mathrm{FL}(\Sigma)$ is the smallest set of formulas containing $\Sigma$ that is $FL$ closed.

\end{definition}

\begin{definition}
    We define $\neg \mathrm{FL}(\Sigma)$, the closure of $\Sigma$, as the smallest set containing $\Sigma$ which is FL-closed and closed under single negations. The single negation of a formulas $\phi$ is 
    $$\sim \phi := 
    \begin{cases}
        \psi &  \phi \text{ is of the form } \neg \psi\\
        \neg \phi & \text{ otherwise}
    \end{cases}
    $$
\end{definition}

\begin{definition}[Atoms]
    Let $\Sigma$ be a set of formulas. A set of formulas $A$ is an atom over $\Sigma$ if it is a maxiaml consistent subset of $\neg \mathrm{FL}(\Sigma)$. 
\end{definition}

\begin{lemma}
    Let $\Sigma$ be any set of formulas, and $A$ any element of $At(\Sigma)$. Then 
    \begin{itemize}
        \item For all $\phi$, exactly one of $\phi$ and $\sim\phi$ is in $A$.
        \item For all $\phi\vee \psi\in \neg \mathrm{FL}(\Sigma)$: $\phi\vee \psi\in A$ iff $\phi\in A$ or $\psi\in A$.
        \item For all $\Ldiamond{\tau}\phi, \Ldiamond{\sigma}\phi\in \neg \mathrm{FL}(\Sigma)$: if $\sigma\leq \tau$ and  $\Ldiamond{\sigma}\phi\in A$, then $\Ldiamond{\tau}\phi\in A$.
        \item For all $\Ldiamond{\tau+\sigma}\phi \in \neg \mathrm{FL}(\Sigma)$: $\Ldiamond{\tau+\sigma}\phi \in A$ iff $\phi\in A$ or $\Ldiamond{\tau}\Ldiamond{\tau+\sigma}\phi\in A$ or $\Ldiamond{\sigma}\Ldiamond{\tau+\sigma}\phi\in A$.
    \end{itemize}
\end{lemma}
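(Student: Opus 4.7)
The plan is to treat all four clauses as consequences of the maximal consistency of $A$ as a subset of $\neg \mathrm{FL}(\Sigma)$, combined with one tailored ingredient per clause: single-negation closure of $\neg \mathrm{FL}(\Sigma)$ for clause 1, classical propositional reasoning for clause 2, the definition of $\leq$ over terms for clause 3, and the fixed-point axiom FP plus the $\mathrm{FL}$-closure conditions for clause 4. The overall pattern is the usual one: from consistency we rule out having both a formula and its negation in $A$, and from maximality (inside $\neg \mathrm{FL}(\Sigma)$) we rule out having neither, whenever both candidates lie in the closure set to which $A$ is maximal.

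For clause 1, since $\neg \mathrm{FL}(\Sigma)$ is closed under single negations, $\sim\phi\in \neg \mathrm{FL}(\Sigma)$ whenever $\phi\in \neg \mathrm{FL}(\Sigma)$. Consistency of $A$ forbids both being in $A$; maximality then forces at least one to be in $A$, since otherwise $A\cup\set{\phi}$ would already be inconsistent (by maximality in $\neg \mathrm{FL}(\Sigma)$), yielding $A\vdash \sim\phi$ and hence $\sim\phi\in A$, contradicting the assumption. For clause 2, the right-to-left direction uses the CPL theorem $\phi\rightarrow(\phi\vee\psi)$ together with the hypothesis $\phi\vee\psi\in \neg\mathrm{FL}(\Sigma)$; the left-to-right direction uses subformula-closure of $\neg \mathrm{FL}(\Sigma)$ (so $\phi,\psi\in \neg\mathrm{FL}(\Sigma)$) and clause 1, so that $\phi\notin A$ and $\psi\notin A$ would give $\sim\phi,\sim\psi\in A$, contradicting the consistency of $A$ with $\phi\vee\psi$. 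For clause 3, $\sigma\leq \tau$ means by definition $\vdash \Ldiamond{\sigma}\phi\rightarrow \Ldiamond{\tau}\phi$; from $\Ldiamond{\sigma}\phi\in A$ and $\Ldiamond{\tau}\phi\in \neg\mathrm{FL}(\Sigma)$ the same consistency/maximality argument gives $\Ldiamond{\tau}\phi\in A$.

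Clause 4 is the only place where a specific modal axiom is used. If $\Ldiamond{\tau+\sigma}\phi\in \neg\mathrm{FL}(\Sigma)$, then by the $\mathrm{FL}$-closure clauses for $+$ and by subformula closure, all three formulas $\phi$, $\Ldiamond{\tau}\Ldiamond{\tau+\sigma}\phi$ and $\Ldiamond{\sigma}\Ldiamond{\tau+\sigma}\phi$ lie in $\neg\mathrm{FL}(\Sigma)$, so maximality of $A$ applies to each of them. The axiom FP provides the theorem $\Ldiamond{\tau+\sigma}\phi\leftrightarrow (\phi\vee \Ldiamond{\tau}\Ldiamond{\tau+\sigma}\phi\vee \Ldiamond{\sigma}\Ldiamond{\tau+\sigma}\phi)$, and both directions of the claimed equivalence now follow by the same maximality-plus-consistency routine, iteratively applying clause 2 to decompose the disjunction.

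There is no deep obstacle here; the only point that requires some care is the bookkeeping that every formula one wishes to assign to or exclude from $A$ actually belongs to $\neg\mathrm{FL}(\Sigma)$, which is exactly what the $\mathrm{FL}$-closure conditions (in particular the clauses about $\Ldiamond{\tau\cdot\sigma}$ and $\Ldiamond{\tau+\sigma}$) were defined to guarantee. Once that bookkeeping is in place, each clause reduces to a one-line propositional argument from the stated axioms.
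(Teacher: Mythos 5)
Your proof is correct and follows exactly the standard route the paper intends: the paper states this lemma without proof, noting only that Step (1) is adapted from Section 4.8 of Blackburn et al., and your argument (consistency plus maximality within $\neg\mathrm{FL}(\Sigma)$, single-negation closure for clause 1, the definition of $\leq$ for clause 3, and the FP axiom together with the FL-closure conditions for clause 4) is precisely that standard argument. One cosmetic caveat: in clause 4 the disjunction $\phi\vee(\Ldiamond{\tau}\Ldiamond{\tau+\sigma}\phi\vee\Ldiamond{\sigma}\Ldiamond{\tau+\sigma}\phi)$ need not itself belong to $\neg\mathrm{FL}(\Sigma)$, so you cannot literally invoke clause 2 there; instead apply clause 1 to each disjunct separately and run the consistency/maximality argument against the FP equivalence directly, which is what your ``maximality-plus-consistency routine'' already amounts to.
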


\begin{lemma}
    Let $\mathcal{M}$ be the set of all MCSs, and $\Sigma$ a set of formulas. 
    $At(\Sigma) = \set{\Gamma\cap \neg\mathrm{FL}(\Sigma)\mid \Gamma\in \mathcal{M}}$.
\end{lemma}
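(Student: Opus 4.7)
The plan is to prove the set equality by showing both inclusions, using Lindenbaum's lemma together with the single-negation closure of $\neg\mathrm{FL}(\Sigma)$ (and the first bullet of the preceding lemma).

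For the inclusion $\supseteq$, I would start from an arbitrary MCS $\Gamma$ and verify that $\Gamma \cap \neg\mathrm{FL}(\Sigma)$ is an atom over $\Sigma$. Consistency is immediate, since any subset of a consistent set is consistent. For maximality within $\neg\mathrm{FL}(\Sigma)$, pick any $\phi \in \neg\mathrm{FL}(\Sigma)$. Because $\Gamma$ is maximally consistent, either $\phi \in \Gamma$ or $\neg\phi \in \Gamma$; in the latter case, since $\neg\mathrm{FL}(\Sigma)$ is closed under single negations, $\sim\phi \in \neg\mathrm{FL}(\Sigma)$, and $\sim\phi$ is propositionally equivalent to $\neg\phi$, so $\sim\phi \in \Gamma \cap \neg\mathrm{FL}(\Sigma)$. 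Either way, $\Gamma \cap \neg\mathrm{FL}(\Sigma)$ decides $\phi$, so it cannot be properly extended within $\neg\mathrm{FL}(\Sigma)$ without becoming inconsistent.

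For the inclusion $\subseteq$, I would take an arbitrary atom $A \in At(\Sigma)$. Since $A$ is consistent (as a set of formulas of the logic), Lindenbaum's lemma yields some MCS $\Gamma \supseteq A$. Clearly $A \subseteq \Gamma \cap \neg\mathrm{FL}(\Sigma)$. For the reverse inclusion, suppose for contradiction that some $\phi \in \Gamma \cap \neg\mathrm{FL}(\Sigma)$ satisfies $\phi \notin A$. By maximality of $A$ within $\neg\mathrm{FL}(\Sigma)$ (and the first clause of the preceding lemma applied to $A$, or directly the fact that $A \cup \set{\phi}$ must be inconsistent while $\phi \in \neg\mathrm{FL}(\Sigma)$), we get $\sim\phi \in A \subseteq \Gamma$. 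But then both $\phi$ and $\sim\phi$ belong to $\Gamma$, contradicting its consistency. Hence $A = \Gamma \cap \neg\mathrm{FL}(\Sigma)$.

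The only subtle point, and the one I would double-check carefully, is the clean invocation of the single-negation closure of $\neg\mathrm{FL}(\Sigma)$: one needs $\phi \in \neg\mathrm{FL}(\Sigma) \Rightarrow \sim\phi \in \neg\mathrm{FL}(\Sigma)$ so that ``decided by $A$'' really means ``contains $\phi$ or contains $\sim\phi$,'' rather than ``contains $\phi$ or contains $\neg\phi$'' (which might take us outside $\neg\mathrm{FL}(\Sigma)$ when $\phi$ is itself a negation). Everything else is a routine MCS/Lindenbaum argument, and there is no interaction with the lattice axioms or the star axioms at this stage — those only re-enter in subsequent steps of the step-by-step construction.
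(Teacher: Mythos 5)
Your proof is correct and is exactly the standard argument the paper relies on: the paper states this lemma without proof (adapting it from Section 4.8 of Blackburn et al.), and the intended justification is precisely your two inclusions via Lindenbaum's lemma and the single-negation closure of $\neg\mathrm{FL}(\Sigma)$. Your closing caveat about needing $\sim\phi\in\neg\mathrm{FL}(\Sigma)$ (rather than $\neg\phi$) is the right point to be careful about, and it is indeed guaranteed by the definition of the closure.
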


\begin{lemma}
If $\phi\in \neg \mathrm{FL}(\Sigma)$ and $\phi$ is consistent, then there is an $A\in At(\Sigma)$ such that $\phi\in A$.    
\end{lemma}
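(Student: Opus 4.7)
The plan is to reduce this to the standard Lindenbaum lemma together with the preceding lemma that characterizes atoms as restrictions of maximal consistent sets. Given that $\phi$ is consistent, I would first invoke Lindenbaum's lemma for the full language $\mathcal{L}$ to extend $\set{\phi}$ to a maximal consistent set $\Gamma \in \mathcal{M}$. This step is routine: the language $\mathcal{L}$ is countable (since $\mathsf{AT}$ is finite and $\mathsf{AF}$ is countable, so $\mathsf{T}$ and the set of formulas are both countable), so the usual enumeration-and-extension construction goes through in the presence of (CPL), (Rules), (K), and necessitation, exactly as in the standard modal completeness proof.

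Next I would set $A := \Gamma \cap \neg\mathrm{FL}(\Sigma)$. By the preceding lemma ($At(\Sigma) = \set{\Gamma \cap \neg\mathrm{FL}(\Sigma) \mid \Gamma \in \mathcal{M}}$), this set $A$ is an atom over $\Sigma$. Finally, since $\phi \in \Gamma$ by construction and $\phi \in \neg\mathrm{FL}(\Sigma)$ by hypothesis, we obtain $\phi \in A$, which is exactly what is required.

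There is essentially no obstacle here: the real work has already been absorbed into the preceding lemma that identifies atoms with the $\neg\mathrm{FL}(\Sigma)$-traces of MCSs. The only thing to be careful about is making sure that the Lindenbaum construction is available in $S5\mathsf{LCDK}$ (respectively $S4\mathsf{LCDK}$), which requires only the propositional base and the deduction-theorem-style reasoning that classical modal logics enjoy; none of the lattice axioms or the fixed-point/induction axioms for $+$ need to be invoked at this stage. Hence the lemma follows in two short moves: extend to an MCS, then restrict to $\neg\mathrm{FL}(\Sigma)$.
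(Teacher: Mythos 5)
Your proof is correct and is exactly the argument the paper intends (the paper leaves this lemma unproved, following the standard treatment in Section 4.8 of Blackburn et al.): extend $\phi$ to a maximal consistent set by Lindenbaum's lemma and restrict to $\neg\mathrm{FL}(\Sigma)$, invoking the preceding lemma that identifies $At(\Sigma)$ with the $\neg\mathrm{FL}(\Sigma)$-traces of MCSs. No issues.
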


\subsection{Step (2): Finite Canonical Model over $\Sigma$}

\begin{definition}

    Let $\Sigma$ be a finite set of formulas. The finite canonical model (FCM) over $\Sigma$ is the triple 
    $\mathcal{M}_F = (At(\Sigma), \set{S^\Sigma_\tau}_{\tau\in T},V^\Sigma)$
    where for all propositional variables $p$, $V^\Sigma(p) = \set{A\in At(\Sigma)\mid p\in A}$, and for all atoms $A,B\in At(\Sigma)$ and for all terms $\tau$, 
    $AS^\Sigma_\tau B \text{ if } \hat{A}\wedge \Ldiamond{\tau} \hat{B} \text{ is consistent}$.
    
\end{definition}

\begin{lemma}\label{lem:reflexSymm}
    The canonical relations are reflexive and symmetric.
\end{lemma}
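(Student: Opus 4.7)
The plan is to handle reflexivity and symmetry separately, using axiom T for the first and the Brouwerian schema (derivable from T and 5) for the second; both arguments apply uniformly to every term $\tau$ because the relevant axioms are schematic in $\tau$.

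For reflexivity, fix any atom $A\in At(\Sigma)$ and any term $\tau\in\mathsf{T}$. The goal is to show that $\hat{A}\wedge \Ldiamond{\tau}\hat{A}$ is consistent. Since $A$ is a maximal consistent subset of $\neg\mathrm{FL}(\Sigma)$, the conjunction $\hat{A}$ is itself consistent. Instantiating axiom T at $\neg\hat{A}$ yields $\vdash \Lbox{\tau}\neg\hat{A}\rightarrow \neg\hat{A}$, whose contrapositive is $\vdash \hat{A}\rightarrow \Ldiamond{\tau}\hat{A}$. Consequently $\hat{A}$ and $\hat{A}\wedge \Ldiamond{\tau}\hat{A}$ are provably equivalent, so the latter is consistent, giving $AS^\Sigma_\tau A$.

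For symmetry, which requires the S5 axiom 5, suppose $AS^\Sigma_\tau B$, i.e. $\hat{A}\wedge \Ldiamond{\tau}\hat{B}$ is consistent. The first step is to derive the Brouwerian schema $\vdash \varphi\rightarrow \Lbox{\tau}\Ldiamond{\tau}\varphi$ from T and 5 in the usual way (T gives $\varphi\rightarrow \Ldiamond{\tau}\varphi$, and 5 written dually gives $\Ldiamond{\tau}\varphi\rightarrow \Lbox{\tau}\Ldiamond{\tau}\varphi$). Instantiating at $\hat{A}$ and combining with the assumption, $\Lbox{\tau}\Ldiamond{\tau}\hat{A}\wedge \Ldiamond{\tau}\hat{B}$ is consistent. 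The K-valid principle $\vdash \Lbox{\tau}\varphi\wedge\Ldiamond{\tau}\psi\rightarrow \Ldiamond{\tau}(\varphi\wedge\psi)$ then lets us conclude that $\Ldiamond{\tau}(\hat{B}\wedge \Ldiamond{\tau}\hat{A})$ is consistent. Finally, whenever $\Ldiamond{\tau}\chi$ is consistent so is $\chi$ (otherwise $\vdash \neg\chi$ would yield $\vdash \Lbox{\tau}\neg\chi$ by necessitation, contradicting consistency). Hence $\hat{B}\wedge \Ldiamond{\tau}\hat{A}$ is consistent, which is precisely $BS^\Sigma_\tau A$.

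There is no real obstacle here; the only subtlety worth flagging is that the symmetry clause belongs to the S5 track of the completeness proof. In the S4 setting the canonical relations are only reflexive (and the counterpart of this lemma should drop symmetry), whereas transitivity is not proved directly on $S^\Sigma_\tau$ but emerges later through the step-by-step construction. Apart from that, the argument is entirely schematic in $\tau$, so no induction on the structure of the term is needed.
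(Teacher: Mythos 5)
Your proof is correct and matches the paper's: reflexivity via $\vdash \hat{A}\rightarrow \Ldiamond{\tau}\hat{A}$ from axiom T, and symmetry via the Brouwerian schema $\vdash \hat{A}\rightarrow \Lbox{\tau}\Ldiamond{\tau}\hat{A}$ together with the K-principle $\Lbox{\tau}\varphi\wedge\Ldiamond{\tau}\psi\rightarrow\Ldiamond{\tau}(\varphi\wedge\psi)$. The only cosmetic difference is that the paper runs the symmetry argument contrapositively (assuming $\Ldiamond{\tau}\hat{A}\wedge\hat{B}$ inconsistent and deriving inconsistency of $\hat{A}\wedge\Ldiamond{\tau}\hat{B}$), while you argue directly; your remark that the S4 variant drops symmetry also agrees with the paper.
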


The above lemma can also be proved as a corollary to the following fact. 
\begin{proposition}\label{prop:inherit} 
For $A,B\in At(\Sigma)$,
    $ AS_\tau B \text{ iff there is } a\in \Propo{\hat{A}}_\mathcal{M} \text{ and there is } b\in \Propo{\hat{B}}_\mathcal{M} \text{ such that } a R^\mathcal{M}_\tau b $
where $\mathcal{M}$ is the canonical model.
\end{proposition}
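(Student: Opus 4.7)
The plan is to prove both directions of the biconditional by appealing to the standard canonical-model machinery for $\mathcal{M}$: the truth lemma (saying $\mathcal{M},\Gamma\models\varphi$ iff $\varphi\in\Gamma$ for every MCS $\Gamma$), the existence lemma for each diamond $\Ldiamond{\tau}$ in $\mathcal{M}$, and Lindenbaum's lemma. The bridge between an atom $A\in At(\Sigma)$ and a world of $\mathcal{M}$ is the characteristic formula $\hat{A}=\bigwedge A$; by the truth lemma, $\Propo{\hat{A}}_\mathcal{M}$ is exactly the set of MCSs that contain every formula in $A$, and this set is non-empty because the atom $A$, being maximal consistent in $\neg\mathrm{FL}(\Sigma)$, is in particular consistent.

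For the right-to-left direction I would take $a\in\Propo{\hat{A}}_\mathcal{M}$, $b\in\Propo{\hat{B}}_\mathcal{M}$ with $aR^\mathcal{M}_\tau b$. The truth lemma gives $\mathcal{M},b\models\hat{B}$, hence $\mathcal{M},a\models\Ldiamond{\tau}\hat{B}$; combined with $\mathcal{M},a\models\hat{A}$ this shows that $\hat{A}\wedge\Ldiamond{\tau}\hat{B}$ is satisfiable in $\mathcal{M}$ and therefore consistent, which is exactly the defining condition of $AS_\tau B$.

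For the left-to-right direction I would begin from $AS_\tau B$, so $\hat{A}\wedge\Ldiamond{\tau}\hat{B}$ is consistent. Lindenbaum extends it to an MCS $a$, which automatically belongs to $\Propo{\hat{A}}_\mathcal{M}$ and in particular contains $\Ldiamond{\tau}\hat{B}$. The existence lemma for $R^\mathcal{M}_\tau$ then produces an MCS $b$ with $aR^\mathcal{M}_\tau b$ and $\hat{B}\in b$, whence $b\in\Propo{\hat{B}}_\mathcal{M}$ by the truth lemma. This provides the required witnesses.

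I do not anticipate any genuine obstacle: the proposition is essentially a reformulation of the truth lemma, dressed up through the $\hat{\cdot}$ translation, with the earlier lemma $At(\Sigma)=\set{\Gamma\cap\neg\mathrm{FL}(\Sigma)\mid\Gamma\in\mathcal{M}}$ ensuring that atoms genuinely correspond to MCSs. As an immediate by-product one recovers Lemma \ref{lem:reflexSymm}: reflexivity and (in the S5 case) symmetry of each $S_\tau$ transfer from the corresponding properties of $R^\mathcal{M}_\tau$, which hold because the canonical relations validate the S4/S5 axioms.
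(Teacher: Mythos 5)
Your proof is correct. The paper states this proposition without an explicit proof, and your argument via the truth lemma, Lindenbaum's lemma, and the existence lemma for the (infinite) canonical model $\mathcal{M}$ is exactly the standard argument the paper evidently intends; your closing remark that reflexivity and symmetry of $S_\tau$ then transfer from $R^\mathcal{M}_\tau$ matches the paper's own observation that Lemma \ref{lem:reflexSymm} follows as a corollary.
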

This fact also indicates that $S_\tau$ is not necessarily transitive. Moreover, it makes the following existence lemma a corollary to the existence lemma for the canonical model.

\begin{lemma}[Existence lemma for FCM]
    Let $A$ be an atom, and let $\tau$ be a term in $T$. Then for all formulas $\Ldiamond{\tau}\psi$ in $\neg \mathrm{FL}(\Sigma)$, $\Ldiamond{\tau}\psi\in A$ iff there is a $B\in At(\Sigma)$ such that $AS_\tau B$ and $\psi\in B$.
\end{lemma}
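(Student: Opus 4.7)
The plan is to derive this existence lemma as a corollary of Proposition \ref{prop:inherit} together with the standard existence lemma for the full canonical model $\mathcal{M}$ (and the bijection between atoms over $\Sigma$ and restrictions of MCSs to $\neg\mathrm{FL}(\Sigma)$). Thus the real work has already been packaged into those facts; I only need to show how they combine.

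For the right-to-left direction, suppose $AS_\tau B$ and $\psi\in B$. By definition of $S_\tau$, the formula $\hat{A}\wedge\Ldiamond{\tau}\hat{B}$ is consistent. Since $\psi\in B$, we have $\vdash\hat{B}\to\psi$, and then by monotonicity of $\Ldiamond{\tau}$ (derivable from K) we get $\vdash\Ldiamond{\tau}\hat{B}\to\Ldiamond{\tau}\psi$. Hence $\hat{A}\wedge\Ldiamond{\tau}\psi$ is consistent. Since $\Ldiamond{\tau}\psi\in\neg\mathrm{FL}(\Sigma)$ and $A$ is maximal consistent in $\neg\mathrm{FL}(\Sigma)$, were $\Ldiamond{\tau}\psi\notin A$, the single-negation closure would force $\neg\Ldiamond{\tau}\psi\in A$, contradicting the consistency just obtained. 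So $\Ldiamond{\tau}\psi\in A$.

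For the left-to-right direction, assume $\Ldiamond{\tau}\psi\in A$. Then $\hat{A}\wedge\Ldiamond{\tau}\psi$ is consistent, so by the Lindenbaum lemma there is an MCS $a\in\mathcal{M}$ extending $A\cup\{\Ldiamond{\tau}\psi\}$, and by the standard truth lemma for the canonical model we have $a\in\Propo{\hat{A}}_\mathcal{M}$ and $\mathcal{M},a\models\Ldiamond{\tau}\psi$. The existence lemma for the canonical model yields some $b\in\mathcal{M}$ with $aR^\mathcal{M}_\tau b$ and $\psi\in b$. Let $B:=b\cap\neg\mathrm{FL}(\Sigma)$. By the earlier lemma identifying $At(\Sigma)$ with these restrictions, $B\in At(\Sigma)$; because $\Ldiamond{\tau}\psi\in\neg\mathrm{FL}(\Sigma)$ and $\neg\mathrm{FL}$ is closed under subformulas, $\psi\in\neg\mathrm{FL}(\Sigma)$, so $\psi\in B$. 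Finally, since $a$ and $b$ witness $aR^\mathcal{M}_\tau b$ with $a\in\Propo{\hat A}_\mathcal{M}$ and $b\in\Propo{\hat B}_\mathcal{M}$, Proposition \ref{prop:inherit} gives $AS_\tau B$.

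The only nontrivial point is the left-to-right step, whose obstacle is precisely what Proposition \ref{prop:inherit} is designed to remove: one cannot in general reason about the finite relation $S_\tau$ directly, because $S_\tau$ is defined by the consistency of $\hat A\wedge\Ldiamond{\tau}\hat B$ rather than by the inductive clauses for $\cdot$ and $+$, and these clauses fail to transfer to the finite quotient (indeed Lemma \ref{lem:reflexSymm} only gives reflexivity and symmetry, not transitivity, of the $S_\tau$). The inheritance proposition sidesteps this by letting us do all non-trivial existence arguments in $\mathcal{M}$ and then pull the witnesses back along the restriction map $\Gamma\mapsto\Gamma\cap\neg\mathrm{FL}(\Sigma)$; once that is available, the existence lemma for FCM is essentially a one-line corollary of the classical one.
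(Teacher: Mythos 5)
Your proof is correct and follows essentially the same route the paper intends: the paper explicitly presents this lemma as a corollary of the existence lemma for the (full) canonical model via Proposition \ref{prop:inherit}, which is exactly how you derive the left-to-right direction, and your right-to-left direction is the standard consistency argument from the definition of $S_\tau$. No gaps.
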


\begin{lemma}\label{lem:blue4.87}
    $S_{\tau+\sigma}\subseteq (S_\tau\cup S_\sigma)^\ast$. 
\end{lemma}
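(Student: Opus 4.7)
The plan is to prove the contrapositive: assuming $(A, B) \notin (S_\tau \cup S_\sigma)^\ast$, I will show that $\hat A \wedge \Ldiamond{\tau+\sigma}\hat B$ is inconsistent, so $A S_{\tau+\sigma} B$ fails. The key idea, standard in Fischer--Ladner style completeness proofs for Kleene star, is to package the $\ast$-closure of $A$ into a single characteristic formula and then apply the induction axiom INDUC.

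Concretely, I would let $X = \{C \in At(\Sigma) \mid A\, (S_\tau \cup S_\sigma)^\ast\, C\}$ and set $\chi := \bigvee_{C\in X} \hat C$. Then $A \in X$, so $\vdash \hat A \to \chi$, while by assumption $B \notin X$. The central derivation to establish is
$$\vdash \chi \to \Lbox{\tau}\chi \wedge \Lbox{\sigma}\chi.$$
For any $C \in X$ and any $D \notin X$, closure of $X$ under $S_\tau$ forbids $C S_\tau D$, so by the canonical definition of $S_\tau$ the formula $\hat C \wedge \Ldiamond{\tau}\hat D$ is inconsistent. Using that $\vdash \bigvee_{D\in At(\Sigma)}\hat D$ (so that $\neg \chi$ is provably equivalent to $\bigvee_{D \notin X}\hat D$), I can combine these finitely many inconsistencies into $\vdash \hat C \to \Lbox{\tau}\chi$, and symmetrically into $\vdash \hat C \to \Lbox{\sigma}\chi$; taking the disjunction over $C\in X$ yields the displayed implication.

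From here the rest is routine: necessitation for $\Lbox{\tau+\sigma}$ followed by INDUC gives $\vdash \chi \to \Lbox{\tau+\sigma}\chi$, hence $\vdash \hat A \to \Lbox{\tau+\sigma}\chi$. Because $B \notin X$ and any two distinct atoms disagree on some formula in $\neg \mathrm{FL}(\Sigma)$, we have $\vdash \chi \to \neg \hat B$; one more necessitation and an application of K then deliver $\vdash \hat A \to \neg\Ldiamond{\tau+\sigma}\hat B$, i.e., $\hat A \wedge \Ldiamond{\tau+\sigma}\hat B$ is inconsistent, contradicting $A S_{\tau+\sigma} B$.

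The main obstacle is the bridging derivation $\vdash \chi \to \Lbox{\tau}\chi \wedge \Lbox{\sigma}\chi$: this is where the semantic $\ast$-closure of $X$ (an external property about which atoms can be reached step by step) must be turned into a uniform syntactic fact about the single formula $\chi$, using only the consistency-based definition of the canonical relations and the provable completeness of the disjunction of atoms over $\Sigma$. Everything after that is a standard combination of K, necessitation, and INDUC.
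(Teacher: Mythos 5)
Your proof is correct and is essentially the paper's own argument: the paper simply defers to Lemma 4.87 of Blackburn--de Rijke--Venema, whose proof is exactly this construction of the characteristic formula $\chi$ of the $(S_\tau\cup S_\sigma)^\ast$-reachable atoms, the bridging derivation $\vdash \chi \to \Lbox{\tau}\chi \wedge \Lbox{\sigma}\chi$ via the consistency-based definition of the canonical relations and the provability of the disjunction of all atoms, and an application of INDUC to conclude $\vdash \hat{A} \to \neg\Ldiamond{\tau+\sigma}\hat{B}$ for unreachable $B$. No substantive differences.
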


\begin{lemma}
    $S^\Sigma_{\sigma\cdot\tau}\subseteq S^\Sigma_\sigma\cap S^\Sigma_\tau$.
\end{lemma}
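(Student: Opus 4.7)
The plan is to reduce the inclusion to the lattice-theoretic fact that $\sigma\cdot\tau\leq \sigma$ (and symmetrically $\sigma\cdot\tau\leq \tau$) in the preorder on terms defined at the end of Section \ref{sec:SSA}. Once this is granted, the argument is a direct unpacking of the definition of $S^\Sigma_\tau$: if $A S^\Sigma_{\sigma\cdot\tau} B$ then $\hat{A}\wedge \Ldiamond{\sigma\cdot\tau}\hat{B}$ is consistent, and since $\sigma\cdot\tau\leq \sigma$ means $\vdash \Ldiamond{\sigma\cdot\tau}\hat{B}\to \Ldiamond{\sigma}\hat{B}$, the weakening $\hat{A}\wedge \Ldiamond{\sigma}\hat{B}$ is also consistent, i.e.\ $A S^\Sigma_\sigma B$. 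The argument for $\tau$ is identical (or can be obtained from the $\sigma$-case via the Commutativity axiom).

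The only real content is therefore to derive $\vdash \Ldiamond{\sigma\cdot\tau}\varphi \to \Ldiamond{\sigma}\varphi$ in $S4\mathsf{LCDK}$ (the $S5$ case is then immediate). I will combine Absorption, FP, and T. By Absorption, $\vdash \Ldiamond{\sigma+(\sigma\cdot\tau)}\varphi \leftrightarrow \Ldiamond{\sigma}\varphi$, so it suffices to show $\vdash \Ldiamond{\sigma\cdot\tau}\varphi \to \Ldiamond{\sigma+(\sigma\cdot\tau)}\varphi$. Instancing FP at $\sigma$ and $\sigma\cdot\tau$ gives
$$\Ldiamond{\sigma+(\sigma\cdot\tau)}\varphi \leftrightarrow \varphi \vee \Ldiamond{\sigma}\Ldiamond{\sigma+(\sigma\cdot\tau)}\varphi \vee \Ldiamond{\sigma\cdot\tau}\Ldiamond{\sigma+(\sigma\cdot\tau)}\varphi.$$
Axiom T together with the dual yields $\vdash \varphi\to \Ldiamond{\sigma+(\sigma\cdot\tau)}\varphi$; applying the derived monotonicity of $\Ldiamond{\sigma\cdot\tau}$ (from Necessitation, K and the dual) gives $\vdash \Ldiamond{\sigma\cdot\tau}\varphi \to \Ldiamond{\sigma\cdot\tau}\Ldiamond{\sigma+(\sigma\cdot\tau)}\varphi$. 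The right-hand side is one of the disjuncts in the FP expansion, so it implies $\Ldiamond{\sigma+(\sigma\cdot\tau)}\varphi$, and chaining through the Absorption equivalence closes the derivation.

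There is little here that should pose difficulty. The only point requiring care is to derive $\sigma\cdot\tau\leq \sigma$ purely syntactically and not by appealing to the semantic inclusion $R_{\sigma\cdot\tau}\subseteq R_\sigma$, since that inclusion is precisely the kind of fact whose proof-theoretic counterpart we are in the middle of building up for the completeness argument. Working with the axioms directly, as above, sidesteps the issue; everything else is mechanical.
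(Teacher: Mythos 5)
Your proof is correct and follows the route the paper clearly intends (the paper states this lemma without proof): unfold the definition of $S^\Sigma_{\sigma\cdot\tau}$ and weaken the consistent set using $\sigma\cdot\tau\leq\sigma$ and $\sigma\cdot\tau\leq\tau$. Your syntactic derivation of $\vdash \Ldiamond{\sigma\cdot\tau}\varphi\rightarrow\Ldiamond{\sigma}\varphi$ from Absorption, FP, T and the monotonicity rule checks out, and it is a worthwhile addition since the paper only gestures at ``basics of lattice theory'' for such order facts.
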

\subsection{Step (3): Transitive FCM}
\begin{definition}[Transitive FCM (TFCM)]

    Given a FCM $\mathcal{M}^\Sigma_F = (At(\Sigma), \set{S^\Sigma_\tau}_{\tau\in T},V^\Sigma)$, we define its transtive closure as 
    $\mathcal{M}^\Sigma_{TF} = (At(\Sigma), \set{(S^\Sigma_\tau)^t}_{\tau\in T},V^\Sigma)$
    where $(S^\Sigma_\tau)^t$ is the transitive closure of $S^\Sigma_\tau$.
    
\end{definition}

To prove the existence lemma for TFCM,  the direction from left to right immediately follows from the existence lemma for FCM, because $S^\Sigma_\tau\subseteq (S^\Sigma_\tau)^t$. The other direction follows from the following fact.
\begin{proposition}\label{prop:AtauEBtau}
    For $A,B\in At(\Sigma)$,
    $AS^\Sigma_\tau B \text{ implies that } A_\tau = B_\tau$
    where $A_\tau = \set{\varphi\mid \Ldiamond{\tau}\varphi\in A}$.
\end{proposition}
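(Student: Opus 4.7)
The plan is to unfold the canonical relation into a consistency statement and reduce the claim to two maximality arguments. By definition, $A S^\Sigma_\tau B$ says exactly that $\hat A \wedge \Ldiamond{\tau}\hat B$ is consistent. Since $A$ and $B$ are maximal consistent subsets of $\neg \mathrm{FL}(\Sigma)$, and the only $\varphi$ relevant for $A_\tau$ are those with $\Ldiamond{\tau}\varphi \in A \subseteq \neg \mathrm{FL}(\Sigma)$, it suffices to show for each such $\varphi$ that $\hat B \wedge \Ldiamond{\tau}\varphi$ is consistent (so that maximality places $\Ldiamond{\tau}\varphi$ into $B$), and symmetrically with the roles of $A$ and $B$ exchanged.

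For the direction $A_\tau \subseteq B_\tau$, fix $\Ldiamond{\tau}\varphi \in A$. The key ingredient is the S5 theorem $\Ldiamond{\tau}\varphi \rightarrow \Lbox{\tau}\Ldiamond{\tau}\varphi$, derivable from axiom 5 together with T and dual; this gives $\hat A \vdash \Lbox{\tau}\Ldiamond{\tau}\varphi$. Combining with the assumed consistency of $\hat A \wedge \Ldiamond{\tau}\hat B$ and the K-derivable principle $\Lbox{\tau}\psi \wedge \Ldiamond{\tau}\chi \rightarrow \Ldiamond{\tau}(\psi \wedge \chi)$, I obtain the consistency of $\Ldiamond{\tau}(\hat B \wedge \Ldiamond{\tau}\varphi)$, hence of $\hat B \wedge \Ldiamond{\tau}\varphi$ itself, and maximality then forces $\Ldiamond{\tau}\varphi \in B$. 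For the reverse inclusion, I would appeal to the symmetry of $S^\Sigma_\tau$ from Lemma \ref{lem:reflexSymm} to swap the roles of $A$ and $B$ and re-run the identical argument.

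The main obstacle is that both halves of this strategy lean crucially on axiom 5 (directly in the forward step and via symmetry of $S^\Sigma_\tau$ in the backward step), so the argument above is genuinely S5-specific. In the $S4\mathsf{LCDK}$ setting axiom 5 is unavailable, but axiom 4 in its dual form $\Ldiamond{\tau}\Ldiamond{\tau}\varphi \rightarrow \Ldiamond{\tau}\varphi$ still delivers the inclusion $B_\tau \subseteq A_\tau$ directly: if $\hat B \vdash \Ldiamond{\tau}\varphi$, then consistency of $\hat A \wedge \Ldiamond{\tau}\hat B$ implies consistency of $\hat A \wedge \Ldiamond{\tau}\Ldiamond{\tau}\varphi$, hence of $\hat A \wedge \Ldiamond{\tau}\varphi$, so maximality places $\Ldiamond{\tau}\varphi$ in $A$. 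This one-sided inclusion is precisely what is needed to propagate $\Ldiamond{\tau}\psi$ backward along an $S^\Sigma_\tau$-chain in the TFCM existence lemma where this proposition is applied, which is why the completeness argument can proceed uniformly whether or not the full equality is available.
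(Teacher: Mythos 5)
Your proof is correct, but it follows a genuinely different route from the paper's. The paper derives the proposition as a corollary of Proposition~\ref{prop:inherit}: it lifts the atoms $A,B$ to maximal consistent sets $a\in\Propo{\hat{A}}_\mathcal{M}$, $b\in\Propo{\hat{B}}_\mathcal{M}$ with $aR^\mathcal{M}_\tau b$ in the full (infinite) canonical model, invokes the standard fact that the canonical relation of an S5-type logic is an equivalence relation so that $a_\tau=b_\tau$, and then restricts back down to $\neg\mathrm{FL}(\Sigma)$ to get $A_\tau=B_\tau$. You instead argue entirely inside the finite structure: unfolding $AS^\Sigma_\tau B$ as consistency of $\hat{A}\wedge\Ldiamond{\tau}\hat{B}$, using $\Ldiamond{\tau}\varphi\rightarrow\Lbox{\tau}\Ldiamond{\tau}\varphi$ (axiom 5) plus the K-principle $\Lbox{\tau}\psi\wedge\Ldiamond{\tau}\chi\rightarrow\Ldiamond{\tau}(\psi\wedge\chi)$ to transfer $\Ldiamond{\tau}\varphi$ from $A$ to $B$, and symmetry of $S^\Sigma_\tau$ (Lemma~\ref{lem:reflexSymm}) for the converse inclusion; your attention to the fact that $\Ldiamond{\tau}\varphi\in\neg\mathrm{FL}(\Sigma)$ is exactly what makes the maximality step legitimate. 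The paper's detour through the canonical model is shorter because it outsources the modal reasoning to a known semantic fact, while your syntactic argument is self-contained and makes visible precisely which axioms carry which inclusion: 5 (and symmetry, itself a consequence of 5) for $A_\tau\subseteq B_\tau$, and 4 alone for $B_\tau\subseteq A_\tau$. That diagnosis is vindicated by the paper itself, which in the S4 section weakens the proposition to exactly the one-sided inclusion $B_\tau\subseteq A_\tau$ and proves it by the same axiom-4 argument you give, noting as you do that this direction is all the TFCM existence lemma needs.
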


\begin{lemma}[Existence lemma for TFCM]\label{lem:existenceTFCM}
    Let $A$ be an atom, and let $\tau$ be a term in $T$. Then for all formulas $\Ldiamond{\tau}\psi$ in $\neg \mathrm{FL}(\Sigma)$, $\Ldiamond{\tau}\psi\in A$ iff there is a $B\in At(\Sigma)$ such that $A(S_\tau)^t B$ and $\psi\in B$.
\end{lemma}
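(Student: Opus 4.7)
The plan is to prove the biconditional in two directions, handling both the S5 and S4 axiomatisations in one pass. The direction from left to right is essentially free, as the paper indicates: if $\Ldiamond{\tau}\psi\in A$, apply the Existence Lemma for FCM to obtain $B\in At(\Sigma)$ with $AS^\Sigma_\tau B$ and $\psi\in B$, and then observe that $S^\Sigma_\tau\subseteq (S^\Sigma_\tau)^t$, so the same $B$ witnesses $A(S^\Sigma_\tau)^t B$.

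For the converse, suppose $A(S^\Sigma_\tau)^t B$ via a chain $A=A_0\, S^\Sigma_\tau\, A_1\, S^\Sigma_\tau\, \cdots\, S^\Sigma_\tau\, A_n=B$ with $\psi\in B$, and assume $\Ldiamond{\tau}\psi\in \neg\mathrm{FL}(\Sigma)$. First I would show $\Ldiamond{\tau}\psi\in B$: by subformula closure $\psi\in\neg\mathrm{FL}(\Sigma)$; axiom T yields $\vdash \psi\to\Ldiamond{\tau}\psi$; and since $B$ is a maximal consistent subset of $\neg\mathrm{FL}(\Sigma)$ with $\psi\in B$, the formula $\sim\Ldiamond{\tau}\psi$ cannot be in $B$, so $\Ldiamond{\tau}\psi\in B$. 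Equivalently, $\psi\in B_\tau$.

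Next I would propagate this membership backwards along the chain by iterated application of Proposition \ref{prop:AtauEBtau}. In the S5 case the proposition yields $(A_i)_\tau=(A_{i+1})_\tau$ at each step, so $B_\tau=A_\tau$, whence $\psi\in A_\tau$ and $\Ldiamond{\tau}\psi\in A$. In the S4 case only the one-sided inclusion $(A_{i+1})_\tau\subseteq (A_i)_\tau$ is needed, and the chain still gives $B_\tau\subseteq A_\tau$. This weaker inclusion is justified by the same reasoning as in the commented proof of Proposition \ref{prop:AtauEBtau}, with axiom 4 replacing symmetry: when the underlying MCSs satisfy $a R^\mathcal{M}_\tau b$ and $\Ldiamond{\tau}\phi\in b$, the canonical relation gives $\Ldiamond{\tau}\Ldiamond{\tau}\phi\in a$, and axiom 4 collapses this to $\Ldiamond{\tau}\phi\in a$.

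The main subtlety, and the only non-trivial point I anticipate, is ensuring that Proposition \ref{prop:AtauEBtau} (or its one-sided form just described) does its work when $R^\mathcal{M}_\tau$ is only a preorder rather than an equivalence; the backwards walk along the $(S^\Sigma_\tau)^t$-chain needs exactly that weaker inclusion, so nothing essential changes. Everything else — lifting $\psi$ to $\Ldiamond{\tau}\psi$ via T, invoking maximality of atoms, and traversing the chain — is entirely routine.
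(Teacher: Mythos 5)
Your proposal is correct and follows essentially the same route as the paper: left-to-right via the Existence Lemma for FCM plus $S^\Sigma_\tau\subseteq(S^\Sigma_\tau)^t$, and right-to-left by lifting $\psi\in B$ to $\Ldiamond{\tau}\psi\in B$ (via T and maximality) and then propagating along the $(S^\Sigma_\tau)^t$-chain using Proposition \ref{prop:AtauEBtau}, with the one-sided inclusion $B_\tau\subseteq A_\tau$ in the S4 case exactly as the paper does in its S4 subsection. You merely make explicit the iteration along the chain and the axiom-4 justification that the paper leaves implicit.
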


\subsection{Step (4): Quasi-regularity of TFCM}

\begin{lemma} $(S_{\tau+\sigma})^t = (S_\tau\cup S_\sigma)^\ast$.
\end{lemma}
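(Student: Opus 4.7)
The plan is to prove the two inclusions separately, each essentially by reducing to facts already established earlier in Step (2).

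For the inclusion $(S_{\tau+\sigma})^t \subseteq (S_\tau \cup S_\sigma)^\ast$, I would start from Lemma \ref{lem:blue4.87}, which gives $S_{\tau+\sigma} \subseteq (S_\tau \cup S_\sigma)^\ast$. Since $(S_\tau \cup S_\sigma)^\ast$ is by definition transitive (it is a reflexive-transitive closure), and the transitive closure $(S_{\tau+\sigma})^t$ is the smallest transitive relation containing $S_{\tau+\sigma}$, the inclusion lifts from $S_{\tau+\sigma}$ to $(S_{\tau+\sigma})^t$.

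For the reverse inclusion $(S_\tau \cup S_\sigma)^\ast \subseteq (S_{\tau+\sigma})^t$, I would first show $S_\tau \subseteq S_{\tau+\sigma}$ and $S_\sigma \subseteq S_{\tau+\sigma}$. By the fixpoint axiom FP we have $\vdash \Ldiamond{\tau}\varphi \rightarrow \Ldiamond{\tau+\sigma}\varphi$ (taking $\varphi$ and noting that $\Ldiamond{\tau}\varphi$ implies $\Ldiamond{\tau}\Ldiamond{\tau+\sigma}\varphi$ since $\vdash \varphi \rightarrow \Ldiamond{\tau+\sigma}\varphi$ from FP), hence $\tau \leq \tau+\sigma$, and symmetrically $\sigma \leq \tau+\sigma$. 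So if $\hat{A} \wedge \Ldiamond{\tau}\hat{B}$ is consistent, then $\hat{A} \wedge \Ldiamond{\tau+\sigma}\hat{B}$ is consistent, giving $A S_{\tau+\sigma} B$; the same argument works for $S_\sigma$. Therefore $S_\tau \cup S_\sigma \subseteq S_{\tau+\sigma} \subseteq (S_{\tau+\sigma})^t$. Now $(S_{\tau+\sigma})^t$ is transitive by construction, and it is reflexive because $S_{\tau+\sigma}$ is reflexive by Lemma \ref{lem:reflexSymm}. Taking the reflexive-transitive closure on both sides of the inclusion $S_\tau \cup S_\sigma \subseteq (S_{\tau+\sigma})^t$ preserves it, and since the right-hand side is already closed under both operations, the result is exactly $(S_{\tau+\sigma})^t$.

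Combining the two inclusions yields the desired equality. I do not anticipate a real obstacle here: the derivation $\tau \leq \tau+\sigma$ from FP is the only non-purely-relational step, and it is a direct application of the axioms. The lemma is really a bookkeeping step that lets the next stage (quasi-regularity of the TFCM) identify $(S_\tau\cup S_\sigma)^\ast$ with the single relation $(S_{\tau+\sigma})^t$.
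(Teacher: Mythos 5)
Your proposal is correct and follows essentially the same route as the paper: the inclusion $(S_{\tau+\sigma})^t \subseteq (S_\tau\cup S_\sigma)^\ast$ by lifting Lemma \ref{lem:blue4.87} through the transitive closure, and the converse by establishing $\tau\leq\tau+\sigma$ and $\sigma\leq\tau+\sigma$ (hence $S_\tau\cup S_\sigma\subseteq S_{\tau+\sigma}$) and then closing up, using reflexivity of the canonical relations. You merely spell out the derivation of $\vdash\Ldiamond{\tau}\varphi\rightarrow\Ldiamond{\tau+\sigma}\varphi$ from FP, which the paper takes for granted.
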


By the above lemma, TFCM only gets one of the two constraints for a regular model right. In this sense, it is quasi-regular. The other constraint $R_{\sigma\cdot \tau} = R_\sigma\cap R_\tau$ does not hold. In fact, only the direction from right to left does not necessarily hold in a TFCM.
\begin{proposition}
    $(S^\Sigma_{\sigma\cdot\tau})^t\subseteq (S^\Sigma_\sigma)^t\cap (S^\Sigma_\tau)^t$
\end{proposition}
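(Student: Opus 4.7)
The plan is to treat this as an almost immediate consequence of the one-step containment $S^\Sigma_{\sigma\cdot\tau}\subseteq S^\Sigma_\sigma\cap S^\Sigma_\tau$ (the lemma proved just before the transitive-closure construction), combined with the monotonicity of the transitive-closure operator on binary relations. No new axiomatic work and no further appeal to canonical-model facts such as Proposition \ref{prop:inherit} are needed; the argument is purely set-theoretic.

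First I would record the elementary fact that whenever $R\subseteq R'$ on a fixed set, one has $R^t\subseteq (R')^t$: any finite chain $x_0\, R\, x_1\, R\,\cdots\, R\, x_n$ witnessing $x_0\, R^t\, x_n$ is ipso facto a chain in $R'$, hence witnesses $x_0\, (R')^t\, x_n$. Applying this twice, to the two inclusions $S^\Sigma_{\sigma\cdot\tau}\subseteq S^\Sigma_\sigma$ and $S^\Sigma_{\sigma\cdot\tau}\subseteq S^\Sigma_\tau$ delivered by the preceding lemma, yields $(S^\Sigma_{\sigma\cdot\tau})^t\subseteq (S^\Sigma_\sigma)^t$ and $(S^\Sigma_{\sigma\cdot\tau})^t\subseteq (S^\Sigma_\tau)^t$; intersecting the two gives the claim.

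There is essentially no obstacle in the argument itself. The point of interest lies precisely in what this proposition does \emph{not} give: as the surrounding discussion already flags, the converse inclusion $(S^\Sigma_\sigma)^t\cap (S^\Sigma_\tau)^t\subseteq (S^\Sigma_{\sigma\cdot\tau})^t$ can fail in TFCM, which is why the completeness proof cannot stop here and must be followed by a repair (a further step-by-step or unravelling/copying construction) that restores the full regularity condition for $\cdot$. The present proposition records that one half of that regularity condition, the easy half, survives the passage to the transitive closure for free.
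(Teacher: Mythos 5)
Your proposal is correct and matches the paper's own argument: the paper likewise takes a chain witnessing $(S^\Sigma_{\sigma\cdot\tau})^t$, applies the one-step inclusion $S^\Sigma_{\sigma\cdot\tau}\subseteq S^\Sigma_\sigma\cap S^\Sigma_\tau$ to each link, and closes under transitivity, which is exactly your monotonicity-of-transitive-closure argument written out on explicit chains. Your packaging of this as the general fact $R\subseteq R'\Rightarrow R^t\subseteq (R')^t$ applied twice is just a cleaner statement of the same proof.
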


\subsection{Step (5): Building S5-Regular Model Step by Step}\label{sec:stepBystep}

\begin{definition}
A network is a triple $\mathcal{N} = (N,\set{R_\tau}_{\tau\in T},\nu)$ such that $R_\tau$ is a binary relation on a set $N$, and $\nu$ is a labeling function mapping each point in $N$ to an atom in $At(\Sigma)$ for a finite set of formulas $\Sigma$.
\end{definition}

\begin{definition}[Walk and Path in a network]
    Given a network $\mathcal{N}$, a walk from $u$ to $v$ is a sequence  $(w_0, R_1\ldots,R_n w_n)$ satisfying $w_0 = u$, $w_n = v$ and $w_iR_i w_{i+1}$ in $\mathcal{N}$; a path from $u$ to $v$ is a walk where $w_i\neq w_j$ for $i\neq j$. A $\tau$-walk from $u$ to $v$ is a walk $(w_0, R_\tau\ldots,R_\tau w_n)$; a $\tau$-path is a $\tau$-walk from $u$ to $v$ where $w_i\neq w_j$ for $i\neq j$. We stipulate that $(u)$ is a $\tau$-path for any $\tau\in T$.
\end{definition}
We will also talk about notions of walk and path in other strctures, the definitions of which are analogous.

\begin{definition}
    A network  $\mathcal{N} = (N,\set{R_\tau}_{\tau\in T},\nu)$ is coherent if for any  $u,v\in \mathcal{N}$ and any $\tau\in T$, 
    \begin{description}
        \item[(C0)] $R_\tau$ is symmetric;
        \item[(C1)] if $uR_\tau v$, then $\widehat{\nu (u)}\wedge \Ldiamond{\tau}\widehat{\nu (v)}$ is consistent;
        \item[(C2)] if $uR_\tau v$ and $\tau\leq \tau'$, then $uR_{\tau'} v$;
        \item[(C3)] if there is a path from $u$ to $v$ where $u\neq v$, then there is $\sigma\in T$ such that 
        $$\uparrow \sigma = \set{\tau \in T\mid \text{ there is a $\tau$-walk from $u$ to $v$}}$$
        the corresponding relation $R_\sigma$ is called the foundational bridge between $u$ and $v$ in $\mathcal{N}$.\footnote{
            Strictly speaking, there are more than one foundational bridges from $u$ to $v$. However, for any two foudational bridges from $u$ to $v$ in a coherent network, $R_\sigma$ and $R_\tau$, $R_\sigma = R_\tau$ because of condition (C2).}
        
    \end{description}
\end{definition}

Here are some basic facts about walks, paths and foundational bridges in a coherent network.
\begin{proposition}\label{prop:pathFact}
    Given a coherent network $\mathcal{N} = (N,\set{R_\tau}_{\tau\in T},\nu)$
    \begin{enumerate}
        \item $\set{\tau \in T\mid \text{ there is a $\tau$-walk from $u$ to $v$}} = \set{\tau \in T\mid \text{ there is a $\tau$-path from $u$ to $v$}}$
        \item if $uR_{\sigma_0}w_1R_{\sigma_1}\ldots R_{\sigma_{n-1}}w_nR_{\sigma_n}v$ is a path from $u$ to $v$,  then $uR_{\sum^n_0\sigma_i}w_1R_{\sum^n_0\sigma_i}\ldots R_{\sum^n_0\sigma_i}w_nR_{\sum^n_0\sigma_i}v$ is a path. 
        \item If $R_\tau$ is the basic bridge between $u$ and $u'$, $R_\sigma$ is the basic bridge between $u'$ and $u''$ and all paths from $u$ to $u''$ go via $u'$, then $R_{\tau+\sigma}$ is the basic bridge between $u$ and $u''$.
    \end{enumerate}

\end{proposition}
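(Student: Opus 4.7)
The plan is to prove the three claims in order, leveraging the coherence conditions (C2) and (C3) and using earlier parts as lemmas in the later ones. Part (1) reduces to standard loop-removal: given a $\tau$-walk $u = w_0, w_1, \ldots, w_n = v$ with every edge labeled $R_\tau$, if some vertex is repeated, say $w_i = w_j$ with $i < j$, then excising the segment $w_{i+1}, \ldots, w_j$ produces a strictly shorter $\tau$-walk from $u$ to $v$; iterating yields a $\tau$-path. Part (2) is an edge-by-edge application of (C2): since $+$ is the join in the lattice of terms determined by $\leq$, the (LATTICE) axioms yield $\sigma_i \leq \sum_{j=0}^{n} \sigma_j$ for every $i$, so (C2) promotes each $R_{\sigma_i}$-edge to an $R_{\sum_j \sigma_j}$-edge; the vertex sequence is unchanged, so distinctness is preserved.

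For part (3), by (C3) it suffices to establish the set equality
$$\set{\rho\in T \mid \text{there is a $\rho$-walk from $u$ to $u''$}} = \uparrow(\tau+\sigma).$$
For $\supseteq$: if $\tau+\sigma \leq \rho$, then $\tau \leq \rho$ and $\sigma \leq \rho$, so the hypotheses on the foundational bridges provide a $\rho$-walk from $u$ to $u'$ and one from $u'$ to $u''$, whose concatenation is a $\rho$-walk from $u$ to $u''$. For $\subseteq$: given a $\rho$-walk from $u$ to $u''$, part (1) produces a $\rho$-path from $u$ to $u''$; by hypothesis this path passes through $u'$, so splitting it at $u'$ yields a $\rho$-walk from $u$ to $u'$ and a $\rho$-walk from $u'$ to $u''$. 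Hence $\rho \in {\uparrow}\tau$ and $\rho \in {\uparrow}\sigma$, that is $\tau \leq \rho$ and $\sigma \leq \rho$, whence $\tau+\sigma \leq \rho$ by the join property.

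The main obstacle is in part (3), specifically invoking the ``all paths go via $u'$'' hypothesis on the output of part (1); this is automatic once one observes that a $\rho$-path is, after forgetting edge labels, a path in the ordinary sense, so the hypothesis applies verbatim. With this pinned down, the rest reduces to path concatenation together with the basic lattice-join manipulations already used in part (2).
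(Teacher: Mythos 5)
Your proposal is correct and follows essentially the same route as the paper: loop-excision for (1), pointwise promotion via (C2) and the join inequalities for (2), and for (3) the two inclusions of $\uparrow(\tau+\sigma)$ against the set of labels admitting a walk from $u$ to $u''$, using concatenation at $u'$ one way and the ``all paths pass through $u'$'' hypothesis the other way. The only cosmetic difference is that you argue with $\rho$-walks for each $\rho\geq\tau+\sigma$ while the paper exhibits a single concatenated $(\tau+\sigma)$-path and lets (C2) do the rest; these are interchangeable given part (1).
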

\begin{proof}
    (1) $\set{\tau \in T\mid \text{ there is a $\tau$-walk from $u$ to $v$}} \supseteq \set{\tau \in T\mid \text{ there is a $\tau$-path from $u$ to $v$}}$ is obvious.

    The other direction holds because from any $\tau$-walk from $u$ to $v$, we can get a $\tau$-path by pruning the $\tau$-walk: traversing the $\tau$-walk from $u$, every time a node $w$ in the $\tau$-walk appears twice, deleting the nodes between the first and the second appearances of $w$ including the second appearance of $w$.    

    (2) follows from the condition C2.

    (3) Take any path from $u$ to $u''$, $uR_\delta w_1 R_\delta\ldots R_\delta w_nR_\delta u''$. It goes via $u'$. So there is $i$ such that $s_i = u'$. $R_\tau$ is the foundational bridge from $u$ to $u'$ and $R_\sigma$ is the foundational bridge between $u'$ and $u''$ . So  $\tau\leq \delta$ and $\sigma\leq \delta$, which imply that $\tau+\sigma\leq \delta$. So $\delta\in \uparrow (\tau+\sigma)$.  
    
    On the other hand, there is a $\tau$-path $uR_\tau w_1 R_\tau\ldots R_\tau w_nR_\tau u'$ and  a $\sigma$-path $u'R_\sigma v_1 R_\sigma\ldots R_\sigma v_mR_\tau u''$. It implies that $uR_\tau w_1 R_{\tau+\sigma}\ldots R_{\tau+\sigma} w_nR_{\tau+\sigma} u'R_{\tau+\sigma} v_1 R_{\tau+\sigma}\ldots R_{\tau+\sigma} v_mR_\tau u''$ is a $\tau+\sigma$-path.
    
    Therefore, $R_{\tau+\sigma}$ is the foundational bridge from $u$ to $u''$.
\end{proof}

\begin{definition}
    A network  $\mathcal{N} = (N,\set{R_\tau}_{\tau\in T},\nu)$ is saturated if it satisfies: 
    \begin{description}
        \item[(S1)] for any  $u,v\in \mathcal{N}$ and any $\sigma+\sigma'\in T$, if $uR_{\sigma+\sigma'}v$, then there is a path from $u$ to $v$ each step of which is $R_\sigma$ or $R_{\sigma'}$.
        \item[(S2)] $\mathcal{N}$ is modally saturated. That is, we demand that if $\Ldiamond{\tau}\psi\in \nu(u)$ for some $u\in N$, then there is some $v\in N$ such that $uR_\tau v$ and $\psi\in \nu(v)$.
    \end{description}
\end{definition}

\begin{definition}
    Let $\mathcal{N} = (N,\set{R_\tau}_{\tau\in T},\nu)$ be a network. The frame $\mathfrak{F}_\mathcal{N} = (N,\set{R_\tau}_{\tau\in T})$ is called the underlying frame of $\mathcal{N}$. The induced valuation $V_\mathcal{N}$ on $\mathfrak{F}_\mathcal{N}$ is defined by $V_\mathcal{N}(p) = \set{s\in N\mid p\in \nu(s)}$ for $p\in \neg FL(\Sigma)$. The structure $\mathfrak{I}_\mathcal{N} = (\mathfrak{F},V_\mathcal{N})$ is the induced model.
\end{definition}

A network is good if it is both coherent and saturated.
\begin{lemma}[Existence Lemma for good network]
    Let $\mathcal{N}$ be a good network. For all formulas $\Ldiamond{\tau}\psi$ in $\neg \mathrm{FL}(\Sigma)$ and $s\in N$, $\Ldiamond{\tau}\psi\in \nu(s)$ iff there is a $s'\in N$ such that $sR_\tau s'$ and $\psi\in \nu(s')$.
\end{lemma}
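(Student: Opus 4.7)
The plan is to prove both directions separately, with the forward direction being essentially a restatement of the saturation condition and the backward direction being a short argument using coherence plus the maximality of atoms. The main point is that the two defining conditions of a good network, saturation (S2) and coherence (C1), are exactly tailored to give each of the two directions.

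For the left-to-right direction, assume $\Ldiamond{\tau}\psi \in \nu(s)$. This is immediate from saturation condition (S2) applied with $u = s$: it yields some $s' \in N$ with $sR_\tau s'$ and $\psi \in \nu(s')$. Nothing further is required.

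For the right-to-left direction, assume $sR_\tau s'$ and $\psi \in \nu(s')$. By coherence condition (C1), $\widehat{\nu(s)} \wedge \Ldiamond{\tau}\widehat{\nu(s')}$ is consistent. Since $\psi$ is a conjunct of $\widehat{\nu(s')}$, we have $\vdash \widehat{\nu(s')} \to \psi$; then monotonicity of $\Ldiamond{\tau}$ (a standard consequence of K and necessitation, via duality) gives $\vdash \Ldiamond{\tau}\widehat{\nu(s')} \to \Ldiamond{\tau}\psi$. Hence $\widehat{\nu(s)} \wedge \Ldiamond{\tau}\psi$ is consistent. Since $\Ldiamond{\tau}\psi \in \neg\mathrm{FL}(\Sigma)$ by hypothesis and $\nu(s)$ is a maximal consistent subset of $\neg\mathrm{FL}(\Sigma)$, membership $\Ldiamond{\tau}\psi \in \nu(s)$ follows: otherwise $\sim\!\Ldiamond{\tau}\psi$ would be in $\nu(s)$ by maximality, contradicting consistency of $\widehat{\nu(s)} \wedge \Ldiamond{\tau}\psi$.

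There is no substantive obstacle here: both directions are essentially immediate once one has set up the right definitions of coherence and saturation. The only subtle point worth stating explicitly is that the argument never needs $\Ldiamond{\tau}\widehat{\nu(s')}$ itself to lie in $\neg\mathrm{FL}(\Sigma)$ — consistency of that formula with $\widehat{\nu(s)}$ is used purely at the level of the proof system — so the closure conditions on $\neg\mathrm{FL}(\Sigma)$ are needed only for the target formula $\Ldiamond{\tau}\psi$, which is assumed to be in $\neg\mathrm{FL}(\Sigma)$ by the hypothesis of the lemma.
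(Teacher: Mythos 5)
Your proof is correct, and it follows the approach the paper clearly intends (the paper states this lemma without proof): the left-to-right direction is exactly saturation condition (S2), and the right-to-left direction uses (C1) together with monotonicity of $\Ldiamond{\tau}$ and the maximality of atoms, which is precisely the argument behind the right-to-left half of the Existence Lemma for the FCM, since (C1) just says $\nu(s)S^\Sigma_\tau\nu(s')$. Your closing remark about where membership in $\neg\mathrm{FL}(\Sigma)$ is actually needed is accurate and worth keeping.
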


\begin{proposition}
    A good network $\mathcal{N} = (N,\set{R_\tau}_{\tau\in T},\nu)$ satsifies: for any $\tau,\tau',\sigma+\sigma'\in T$ and $u,v\in N$
    \begin{enumerate}
        \item $R_{\tau\cdot\tau'} \subseteq  R_\tau\cap R_{\tau'}$;
        \item $R_{\sigma+\sigma'} \subseteq (R_\sigma\cup R_{\sigma'})^\ast$
    \end{enumerate}
\end{proposition}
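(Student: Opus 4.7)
The plan is to observe that both items unpack almost immediately from the defining conditions of a good network, modulo a short lattice-theoretic preliminary about the term preorder $\leq$. Neither item requires the full strength of saturation together with coherence; (1) rests on coherence alone, (2) on saturation alone.

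For item (1), I would first establish as a preliminary that $\tau\cdot\tau' \leq \tau$ and $\tau\cdot\tau'\leq \tau'$ in the preorder. This is the statement that meet lies below its components in the term lattice. It follows from the axioms in (LATTICE) and (STAR): the absorption axiom $\Ldiamond{\tau+(\tau\cdot\tau')}\varphi \leftrightarrow \Ldiamond{\tau}\varphi$ gives $\tau+(\tau\cdot\tau') \equiv \tau$, and a standard argument using the fixed-point axiom (FP) together with normality and necessitation yields $\sigma \leq \tau+\sigma$ for all $\sigma$ (apply necessitation to the $\varphi$-disjunct of FP, then use the K-axiom for $\Ldiamond{\sigma}$ to derive $\Ldiamond{\sigma}\varphi \rightarrow \Ldiamond{\sigma}\Ldiamond{\tau+\sigma}\varphi$, and close via another $\Ldiamond{\sigma}$-disjunct of FP). Chaining, $\tau\cdot\tau' \leq \tau+(\tau\cdot\tau') \equiv \tau$, and symmetrically for $\tau'$. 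With this in hand, coherence condition (C2) lifts the inequalities directly to relations: $uR_{\tau\cdot\tau'}v$ implies $uR_\tau v$ and $uR_{\tau'}v$, so $R_{\tau\cdot\tau'}\subseteq R_\tau\cap R_{\tau'}$.

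For item (2), I would invoke saturation condition (S1) verbatim: given $uR_{\sigma+\sigma'}v$, (S1) delivers a path $u = w_0, w_1,\ldots, w_n = v$ in $\mathcal{N}$ each step of which is either $R_\sigma$ or $R_{\sigma'}$. By the definition of reflexive transitive closure this witnesses $(u,v)\in (R_\sigma\cup R_{\sigma'})^\ast$, with the edge case $u = v$ handled by the stipulated length-zero path $(u)$.

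The hard part, in all honesty, is next to nothing: both the symmetry with the regularity clauses $R_{\sigma\cdot\tau} = R_\sigma\cap R_\tau$ and $R_{\sigma+\tau} = (R_\sigma\cup R_\tau)^\ast$ of a regular model, and the direction of inclusion obtained, are precisely what was engineered into (C2) and (S1) in the first place. The only piece worth flagging as real content is the preliminary $\tau\cdot\tau' \leq \tau,\tau'$, which lives at the propositional level rather than at the network level, and which the step-by-step construction presumably uses elsewhere too, so it can be cited rather than re-derived inside this proof.
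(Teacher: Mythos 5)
Your proposal is correct and follows exactly the paper's own route: the paper proves item (1) by appealing to coherence condition (C2) and item (2) by appealing to saturation condition (S1), which is precisely your decomposition. The only difference is that you additionally spell out the derivation of the preliminary term-level inequalities $\tau\cdot\tau'\leq\tau$ and $\tau\cdot\tau'\leq\tau'$ from the absorption and fixed-point axioms, which the paper leaves implicit as ``basics of lattice theory.''
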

\begin{proof}
    The first follows from (C2) and the second follows from (S1).
\end{proof}

\begin{definition}
    Given a good network $\mathcal{N} = (N,\set{R_\tau}_{\tau\in T},\nu)$, let the reflexive and transitive closure of $R_\tau$ for each $\tau\in T$ be denoted by $E_\tau$, the perfect network generated from $\mathcal{N}$ is a triple $\mathcal{P}_\mathcal{N} = (N,\set{E_\tau}_{\tau\in T},\nu)$. 
\end{definition}

\begin{proposition}
    A perfect network satsisfies C0, C2, C3, S1 and S2. Moreover, the frame underlying a perfect network is S5-regular.
\end{proposition}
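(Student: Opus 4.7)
The approach is to exploit the basic equivalence: for any $\tau\in\mathsf{T}$, $uE_\tau v$ in $\mathcal{P}_\mathcal{N}$ iff there is an $R_\tau$-walk from $u$ to $v$ in $\mathcal{N}$ (including the degenerate walk $(u)$ when $u=v$). With this in hand, every clause either transfers stepwise from the corresponding property of $\mathcal{N}$ or collapses to the lattice structure on terms induced by $\leq$.

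First I would dispose of the routine conditions. For C0, each $R_\tau$ is symmetric by (C0) of $\mathcal{N}$, and the reflexive-transitive closure of a symmetric relation is symmetric, so $E_\tau$ is an equivalence relation. For C2, given $uE_\tau v$ and $\tau\leq\tau'$, unfold an $R_\tau$-walk witnessing $uE_\tau v$ and apply (C2) of $\mathcal{N}$ to each step to obtain an $R_{\tau'}$-walk, whence $uE_{\tau'}v$. For S2, since $R_\tau\subseteq E_\tau$, any witness supplied by (S2) of $\mathcal{N}$ also witnesses (S2) in $\mathcal{P}_\mathcal{N}$. For S1, from $uE_{\sigma+\sigma'}v$ pick an $R_{\sigma+\sigma'}$-walk; applying (S1) of $\mathcal{N}$ to each step produces a concatenation of $R_\sigma$- and $R_{\sigma'}$-paths, and each such step is an $E_\sigma$- or $E_{\sigma'}$-step. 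For C3, note that since $E_\tau$ is already reflexive and transitive, ``there is a $\tau$-walk from $u$ to $v$ in $\mathcal{P}_\mathcal{N}$'' reduces to $uE_\tau v$, which by the opening equivalence reduces to ``there is an $R_\tau$-walk from $u$ to $v$ in $\mathcal{N}$''. Hence the walk-type sets for $\mathcal{P}_\mathcal{N}$ and $\mathcal{N}$ coincide, and (C3) of $\mathcal{N}$ supplies the required $\sigma$ with $\uparrow\sigma$ as that common set.

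It remains to verify S5-regularity of the underlying frame. Each $E_\tau$ is an equivalence relation by the argument above. The inclusion $E_{\sigma\cdot\tau}\subseteq E_\sigma\cap E_\tau$ is immediate from the earlier lemma $S^\Sigma_{\sigma\cdot\tau}\subseteq S^\Sigma_\sigma\cap S^\Sigma_\tau$ applied step by step along a witnessing walk. For the join clause, from $\sigma,\tau\leq\sigma+\tau$ and (C2) we get $E_\sigma\cup E_\tau\subseteq E_{\sigma+\tau}$, hence $(E_\sigma\cup E_\tau)^\ast\subseteq E_{\sigma+\tau}$; the converse inclusion is exactly what (S1), just established, provides.

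The main obstacle will be the remaining inclusion $E_\sigma\cap E_\tau\subseteq E_{\sigma\cdot\tau}$: the two given witnessing walks $u R_\sigma\cdots R_\sigma v$ and $u R_\tau\cdots R_\tau v$ need bear no structural resemblance, so a direct inductive merging has no chance. Here is where (C3) of $\mathcal{N}$ becomes essential. Assuming $u\neq v$, let $R_\delta$ be the foundational bridge from $u$ to $v$ in $\mathcal{N}$; both $\sigma$ and $\tau$ lie in $\uparrow\delta$, so $\delta\leq\sigma$ and $\delta\leq\tau$, and by the (LATTICE) axioms $\cdot$ realises the meet on the $\leq$-ordered quotient, yielding $\delta\leq\sigma\cdot\tau$, i.e.\ $\sigma\cdot\tau\in\uparrow\delta$. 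Thus there is a $(\sigma\cdot\tau)$-walk from $u$ to $v$ in $\mathcal{N}$ (equivalently, applying (C2) to any $\delta$-walk which exists by choice of $\delta$), giving $uE_{\sigma\cdot\tau}v$. The case $u=v$ is handled by reflexivity of $E_{\sigma\cdot\tau}$.
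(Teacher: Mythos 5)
Your proof is correct and follows essentially the same route as the paper's: the conditions C0, C2, C3, S1, S2 and the join clause are transferred stepwise by unfolding $E_\tau$-steps into $R_\tau$-walks, and the crux, $E_\sigma\cap E_\tau\subseteq E_{\sigma\cdot\tau}$, is handled exactly as in the paper via the foundational bridge supplied by C3 together with the fact that $\cdot$ is the meet for $\leq$. One small slip worth fixing: the easy inclusion $E_{\sigma\cdot\tau}\subseteq E_\sigma\cap E_\tau$ should be justified by C2 together with $\sigma\cdot\tau\leq\sigma$ and $\sigma\cdot\tau\leq\tau$ (as the paper does), not by the lemma $S^\Sigma_{\sigma\cdot\tau}\subseteq S^\Sigma_\sigma\cap S^\Sigma_\tau$, which concerns the canonical relations of the FCM rather than the relations of the network.
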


\begin{proof}
    Given a perfect network $\mathcal{P}_\mathcal{N} = (N,\set{E_\tau}_{\tau\in T},\nu)$ from the good network $\mathcal{N} = (N,\set{R_\tau}_{\tau\in T},\nu)$, we first prove that it satisfies C0, C2, C3, S1 and S2.

    (C0) Assume that $uE_\tau v$, then $uR_\tau v$ or there is $w_1,\ldots,w_n\in N$ such that $uR_\tau w_1R_\tau\ldots R_\tau w_nR_\tau v$. $R_\tau$ is symmetric, so $v E_\tau u$. 


    (C2) Assume that $uE_\tau v$ and $\tau\leq \tau'$. If $uR_\tau v$, then it follows that $uR_{\tau'} v$ and thus $uE_{\tau'} v$. So assume that $uR_\tau v$ is not the case. Then there is $w_1,\ldots,w_n\in N$ such that $uR_\tau w_1R_\tau\ldots R_\tau w_nR_\tau v$. So $uR_{\tau'} w_1 R_{\tau'}\ldots R_{\tau'} w_n R_{\tau'} v$ and thus  $uE_{\tau'}  v$.

    (C3) Assume that there is a path from $u$ to $v$ in $\mathcal{P}_\mathcal{N}$. It follows that there is a path in $\mathcal{N}$ from $u$ to $v$. So there is $\sigma\in T$ such that $\sigma$ is a foundational bridge from $u$ to $v$.
    It is easy to see that 
    $$\uparrow \sigma \subseteq \set{\tau \in T\mid \text{ there is a $\tau$-path from $u$ to $v$ in }\mathcal{P}_\mathcal{N}}.$$

    For the other direction, take any path $uE_\tau w_1 E_\tau\ldots E_\tau w_n E_\tau v$, then there must be a path between $w_i$ and $w_{i+1}$ in $\mathcal{N}$ for any $1\leq i\leq n-1$:
    $w_iR_\tau w^1_i R_\tau w^2_i R_\tau\ldots R_\tau w^{m_i}_iR_\tau w_{i+1}$.
    So there is a $\tau$-path from $u$ to $v$ in $\mathcal{N}$, which implies that $\tau\in \uparrow \sigma$. This implies that 
    $$\uparrow \sigma \supseteq \set{\tau \in T\mid \text{ there is a $\tau$-path from $u$ to $v$ in }\mathcal{P}_\mathcal{N}}.$$
    
    (S1) Assume that $uE_{\sigma+\sigma'} v$. It follows that $u=w_0R_{\sigma+\sigma'} w_1 R_{\sigma+\sigma'}\ldots R_{\sigma+\sigma'} w_n R_{\sigma+\sigma'} w_{n+1}=v$ for some $w_1,\ldots,w_n$. For each $w_i R_{\sigma+\sigma'} w_{i+1}$, there is a path 
    $w_iR_{\sigma^1_i}w^1_i R_{\sigma^2_i}w^2_i\ldots w^{m_i-1}_iR_{\sigma^{m_i}_i} w_{i+1}$
    where $R_{\sigma^j_i}$ is $R_\sigma$ or $R_{\sigma'}$. 
    So there is a path from $u$ to $v$ each step of which is $R_\sigma$ or $R_{\sigma'}$, which implies that there is a path from $u$ to $v$ each of which is $E_\sigma$ or $E_{\sigma'}$.

    (S2) This condition obviously holds in $\mathcal{P}_\mathcal{N}$

    $E_{\tau\cdot\tau'} \subseteq  E_\tau\cap E_{\tau'}$: This follows from C2.

    $E_{\tau\cdot\tau'} \supseteq  E_\tau\cap E_{\tau'}$: Assume that $uE_\tau v$ and $uE_{\tau'}v$. By C3, there is a foundational bridge $E_\sigma$ between $u$ and $v$. So for some $w_1,\ldots,w_n\in N$, $uE_\sigma w_1 E_\sigma\ldots E_\sigma w_n E_\sigma v$, which implies that $uE_\sigma v$ by transitivity of $E_\sigma$. Together with C2, it implies that $\uparrow \sigma = \set{\delta \in T\mid uE_\delta v}$. From our assumption it follows that $\sigma\leq \tau$ and $\sigma\leq \tau'$. So $\sigma\leq \tau\cdot\tau'$, which implies that $uE_{\tau\cdot\tau'}v$.

    $E_{\sigma+\sigma'} \subseteq (E_\sigma\cup E_{\sigma'})^\ast$: This follows from S1.

    $E_{\sigma+\sigma'} \supseteq (E_\sigma\cup E_{\sigma'})^\ast$: Assume that $u (E_\sigma\cup E_{\sigma'})^\ast v$. By C2, $u (E_{\sigma+\sigma'})^\ast v$. By transitivity of $E_{\sigma+\sigma'}$, it follows that $uE_{\sigma+\sigma'} v$.
    
    ($E_\tau$ is an equivalence relation): obvious.
\end{proof}

\begin{lemma}[Existence Lemma for perfect network]\label{lem:existencePerfect}
    Let $\mathcal{P}_\mathcal{N}$ be a perfect network generated from a good network $\mathcal{N}$. For all formulas $\Ldiamond{\tau}\psi$ in $\neg \mathrm{FL}(\Sigma)$ and $s\in N$, $\Ldiamond{\tau}\psi\in \nu(s)$ iff there is a $s'\in N$ such that $sE_\tau s'$ and $\psi\in \nu(s')$.
\end{lemma}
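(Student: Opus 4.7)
The plan is to split into the two implications. The forward implication ``$\Ldiamond{\tau}\psi\in\nu(s) \Rightarrow \exists s'\,(sE_\tau s'\wedge\psi\in\nu(s'))$'' is immediate: by the already-proved Existence Lemma for good networks there is $s'$ with $sR_\tau s'$ and $\psi\in\nu(s')$, and $R_\tau\subseteq E_\tau$ by definition of reflexive-transitive closure. All the work lies in the converse.

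For the converse, assume $sE_\tau s'$ and $\psi\in\nu(s')$. Unfolding $E_\tau$, there is an $R_\tau$-walk $s=w_0R_\tau w_1R_\tau\ldots R_\tau w_n=s'$ with $n\ge 0$, and I proceed by induction on $n$. The base case $n=0$ (so $s=s'$ and $\psi\in\nu(s)$) invokes the T-axiom $\vdash\psi\to\Ldiamond{\tau}\psi$: since $\Ldiamond{\tau}\psi\in\neg\mathrm{FL}(\Sigma)$ by hypothesis and $\nu(s)$ is maximal consistent within $\neg\mathrm{FL}(\Sigma)$, this gives $\Ldiamond{\tau}\psi\in\nu(s)$.

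For the inductive step I will propagate $\Ldiamond{\tau}\psi$ backwards along the walk one edge at a time. The first edge $w_{n-1}R_\tau w_n$ uses only coherence (C1) and monotonicity of $\Ldiamond{\tau}$: (C1) gives consistency of $\widehat{\nu(w_{n-1})}\wedge\Ldiamond{\tau}\widehat{\nu(w_n)}$, and since $\widehat{\nu(w_n)}\vdash\psi$ we derive consistency of $\widehat{\nu(w_{n-1})}\wedge\Ldiamond{\tau}\psi$, hence $\Ldiamond{\tau}\psi\in\nu(w_{n-1})$ by maximality within $\neg\mathrm{FL}(\Sigma)$. Each subsequent backwards edge, with $\Ldiamond{\tau}\psi$ already in $\nu(w_{i+1})$, additionally invokes axiom 4: (C1) and monotonicity first yield consistency of $\widehat{\nu(w_i)}\wedge\Ldiamond{\tau}\Ldiamond{\tau}\psi$, and $\vdash\Ldiamond{\tau}\Ldiamond{\tau}\psi\to\Ldiamond{\tau}\psi$ (the dual of axiom 4) collapses this to consistency of $\widehat{\nu(w_i)}\wedge\Ldiamond{\tau}\psi$, so $\Ldiamond{\tau}\psi\in\nu(w_i)$. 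After $n$ such contractions we reach $\Ldiamond{\tau}\psi\in\nu(s)$.

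The main obstacle is a subtle closure issue: the auxiliary formulas $\Ldiamond{\tau}\widehat{\nu(w_n)}$ and $\Ldiamond{\tau}\Ldiamond{\tau}\psi$ appearing in the propagation are not in general members of $\neg\mathrm{FL}(\Sigma)$, so one cannot appeal to them as ingredients of atoms. The argument handles this by reasoning purely about consistency with respect to the full derivability relation of $S4\mathsf{LCDK}$ (respectively $S5\mathsf{LCDK}$); the only formula whose membership in some $\nu(w_i)$ needs to be inferred from maximal consistency is $\Ldiamond{\tau}\psi$ itself, which is in $\neg\mathrm{FL}(\Sigma)$ by hypothesis. This is precisely the place where axiom 4 is essential, which explains why the same style of existence lemma would fail for weaker base logics.
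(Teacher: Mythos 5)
Your proof is correct. Its skeleton---reduce the right-to-left direction to backward propagation of $\Ldiamond{\tau}\psi$ along a single $R_\tau$-edge, then iterate along the walk witnessing $sE_\tau s'$---is the same as the paper's, which runs the analogous induction in Lemma~\ref{lem:existenceTFCM} via Proposition~\ref{prop:AtauEBtau} ($AS^\Sigma_\tau B$ implies $A_\tau=B_\tau$). The difference is in how the one-step propagation is discharged: for the S5 system the paper argues semantically, lifting the FCM edge to an edge of the full canonical model via Proposition~\ref{prop:inherit} and using that $R^{\mathcal{M}}_\tau$ there is an equivalence relation, whereas you argue purely syntactically from (C1), monotonicity of $\Ldiamond{\tau}$, and the duals of axioms T and 4. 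Your version coincides with the modification the paper itself makes for the S4 case, so it has the mild advantage of covering $S4\mathsf{LCDK}$ and $S5\mathsf{LCDK}$ uniformly: you only ever need the inclusion $\nu(w_{i+1})_\tau\subseteq\nu(w_i)_\tau$, never its converse, so symmetry of the relations plays no role. Your closing observation---that the auxiliary formulas $\Ldiamond{\tau}\widehat{\nu(w_n)}$ and $\Ldiamond{\tau}\Ldiamond{\tau}\psi$ need not lie in $\neg\mathrm{FL}(\Sigma)$ and are used only inside consistency claims over the full derivability relation, membership in an atom being invoked solely for $\Ldiamond{\tau}\psi$ itself---is exactly the point that makes the finitary argument go through; the paper leaves it implicit. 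Also, explicitly treating the $n=0$ case via the T-axiom is a detail worth keeping, since $E_\tau$ is the \emph{reflexive} and transitive closure while the TFCM argument being imitated only closes under transitivity.
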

\begin{proof}
    The proof is similar to that of Lemma \ref{lem:existenceTFCM}.
\end{proof}

Next, we identify those defects we may encounter during our construction of a perfect network, after which we show that they can all be repaired.

\begin{definition}
    Let $\mathcal{N} = (N,\set{R_\tau}_{\tau\in T},\nu)$ be a network. An S1-defect of $\mathcal{N}$ consists of a pair of nodes $(s,s')\in R_{\sigma+\sigma'}$ for which no path from $s$ to $s'$ exists such that each step is $R_\sigma$ or $R_{\sigma'}$. An S2-defect consists of a node $s$ and a formula $\Ldiamond{\tau}\psi\in\nu(s)$ for which there is no $s'\in N$ such that $sR_\tau s'$ and $\psi\in \nu(s')$.
\end{definition}

\begin{definition}
    Let $\mathcal{N}_0 = (N_0,\set{R^0_\tau}_{\tau\in T},\nu_0)$ and $\mathcal{N}_1 = (N_1,\set{R^1_\tau}_{\tau\in T},\nu_1)$ be two networks. We say that $\mathcal{N}_1$ extends $\mathcal{N}_0$ (notation $\mathcal{N}_1\rhd \mathcal{N}_0$) if $\mathfrak{F}_{\mathcal{N}_0}$  is a subframe of $\mathfrak{F}_{\mathcal{N}_1}$ and $\nu_0$ agrees with $\nu_1$ on $N_0$.
\end{definition}

\begin{lemma}[Repair lemma]
    For any defect of a finite, coherent network $\mathcal{N}$ there is a finite, coherent $\mathcal{N}'\rhd \mathcal{N}$ lacking this defect.
\end{lemma}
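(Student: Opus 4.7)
The plan is to treat the two kinds of defect separately, in each case drawing new nodes and edges from the finite canonical model (FCM) over $\Sigma$. Newly introduced nodes always carry labels in $At(\Sigma)$, and each newly added family of edges is closed under $\leq$ and symmetrised; this secures C0 and C2 by construction, so the work concentrates on C1 and C3.

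For an S2-defect at node $s$ with witnessless $\Ldiamond{\tau}\psi\in\nu(s)$, the existence lemma for FCM yields an atom $B\in At(\Sigma)$ with $\psi\in B$ and $\nu(s)\,S_\tau\,B$. I adjoin a fresh node $t$ with $\nu'(t):=B$, and for every $\tau'\in\uparrow\tau$ insert the unordered pair $\set{s,t}$ into $R'_{\tau'}$. C1 is inherited from $\widehat{\nu(s)}\wedge\Ldiamond{\tau}\widehat{B}$ being consistent together with $\tau\leq\tau'$. For C3 at any fresh pair $(u,t)$ with $u\in N$, every walk from $u$ to $t$ must pass through the unique new edge at $s$, so by Proposition \ref{prop:pathFact}(1) and the lattice identity $\uparrow a\cap\uparrow b=\uparrow(a+b)$ the set of walk labels is the principal filter generated by $\rho_{us}+\tau$, where $R_{\rho_{us}}$ is the foundational bridge from $u$ to $s$ in $\mathcal{N}$.

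For an S1-defect $(s,s')\in R_{\sigma+\sigma'}$, coherence of $\mathcal{N}$ gives $\nu(s)\,S_{\sigma+\sigma'}\,\nu(s')$ in FCM, and Lemma \ref{lem:blue4.87} then supplies a FCM-path $\nu(s)=A_0,\ldots,A_n=\nu(s')$ whose consecutive atoms are linked by $S_\sigma$ or $S_{\sigma'}$. I introduce fresh nodes $t_1,\ldots,t_{n-1}$ labelled by $A_1,\ldots,A_{n-1}$ (setting $t_0=s$, $t_n=s'$), and for each consecutive pair linked by $S_{\mu_i}$ with $\mu_i\in\set{\sigma,\sigma'}$ I add $\set{t_i,t_{i+1}}$ to $R'_\rho$ for every $\rho\in\uparrow\mu_i$. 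This immediately supplies the missing $\set{R_\sigma,R_{\sigma'}}$-path from $s$ to $s'$, and C1 is checked as above.

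The main obstacle will be the verification of C3 at every pair in the enlarged network, because the inserted $\mu$-path can create new walks between pre-existing nodes $u,v$ whose labels are \emph{a priori} incomparable with the old foundational bridge $R_{\rho_{uv}}$, and in a general lattice the union of principal filters need not be principal. I would discharge this by first noting that every new walk between original nodes factors as an old walk from $u$ to either $s$ or $s'$, followed by a contiguous $\mu$-segment of the inserted path, followed by an old walk from the exit vertex to $v$; the label filter of such a walk then has the form $\uparrow(\rho_{us}+(\mu\text{-sum})+\rho_{s'v})$, whose middle summand lies in $\set{\sigma,\sigma',\sigma+\sigma'}$. Comparison with $\uparrow\rho_{uv}$, which by C2 already contains $\rho_{us}+(\sigma+\sigma')+\rho_{s'v}$, then pins down the new foundational bridge, at worst after adjoining finitely many additional lattice-dictated edges to restore principality. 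The analogous analysis for pairs involving a new $t_i$ is strictly simpler, and the resulting $\mathcal{N}'$ is visibly finite, coherent, and free of the targeted defect.
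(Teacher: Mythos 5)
Your overall strategy is the paper's: witnesses are pulled from the FCM, new edges are inserted symmetrically and upward-closed under $\leq$, so C0 and C2 are automatic and the burden falls on C1 and C3. Your treatment of S2-defects is correct and matches the paper, including the computation of the new foundational bridge as $\uparrow(\rho_{us}+\tau)$.

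The S1 case, however, has a genuine gap, and it sits exactly where you flag "the main obstacle." You insert the FCM path from $\nu(s)$ to $\nu(s')$ as it comes, with each new edge labelled by whichever of $\sigma,\sigma'$ the corresponding FCM step happens to carry. Nothing prevents that path from using only $S_\sigma$-steps (the FCM guarantees a path whose steps are $S_\sigma$ \emph{or} $S_{\sigma'}$, not one using both). In that case the inserted segment is a pure $\sigma$-path from $s$ to $s'$, so the label set for walks from $s$ to $s'$ in $\mathcal{N}'$ becomes $\uparrow\sigma_0\cup\uparrow\sigma$, where $\sigma_0$ is the old foundational bridge; since the defect forces $\sigma_0\nleq\sigma$, this union is not a principal filter whenever $\sigma\nleq\sigma_0$, and C3 fails. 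Your proposed remedy --- "adjoining finitely many additional lattice-dictated edges to restore principality" --- is not a proof: every added edge must be licensed by FCM-consistency of the labels (C1), and adding edges can create new non-principal label sets elsewhere, so it is not clear the process terminates in a coherent network. The paper's fix is a normalization you omit: using reflexivity of the canonical relations to pad the FCM path into a strictly alternating sequence $\sigma,\sigma',\sigma,\sigma',\ldots$ of even length $n>3$. Then any path between old nodes that enters the new segment must traverse it entirely and therefore uses both a $\sigma$-edge and a $\sigma'$-edge, so its label lies in $\uparrow(\sigma+\sigma')\subseteq\uparrow\sigma_0$; consequently no new labels appear between old nodes and the foundational bridges there are unchanged, while the bridges involving the fresh nodes $t_i$ are computed to be $\uparrow\sigma$, $\uparrow\sigma'$, $\uparrow(\sigma+\sigma')$ or $\uparrow(\sigma_0+\sigma)$ as in the paper's case analysis. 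With that normalization added, your argument closes; without it, the lemma as you prove it is false for some choices of FCM path.
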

\begin{proof}

We first deal with the case when the defect is S2 type. Given the defect pair $(s,\Ldiamond{\tau}\psi)$,  choose a new point which is not in $N$ and let $B$ be an atom such that $\nu(s)S^\Sigma_\tau B$ in the FCM for $\Sigma$ (such a $B$ exists by the Existence Lemma for FCM). Define $\mathcal{N}' = (N',\set{R'_\tau}_{\tau\in T},\nu')$ as follows:
\begin{itemize}
    \item $N' := N\cup \set{s'}$;
    \item $R'_\sigma := R_\sigma$ for all $\sigma\ngeq \tau$ and $R'_\sigma := R_\sigma\cup \set{(s,s'),(s',s)}$ for all $\sigma\geq \tau$;
    \item $\nu' = \nu\cup \set{(s',B)}$.
\end{itemize}

We first prove that $\mathcal{N}'$ satisfies (C3). 
If there is a path from $t$ to $s'$ in $\mathcal{N}'$, the path must pass through $s$. Because $\uparrow \tau =\set{\delta\mid sR_\delta s'}$ and there is $\sigma\in T$ such that 
$\uparrow \sigma = \set{\delta \in T\mid \text{ there is a $\delta$-path from $t$ to $s$ in }\mathcal{N}'}$, by Proposition \ref{prop:pathFact}.(3), it follows that $\uparrow (\sigma+\tau) = \set{\delta \in T\mid \text{ there is a $\delta$-path from $t$ to $s'$ in }\mathcal{N}'}$.

The fact that everytime we add a pair into a relation $R_\tau$ we also add the pair into $R_\sigma$ for $\sigma\geq \tau$ makes sure that (C2) still holds in $\mathcal{N}'$. The reason for (C1) holding in $\mathcal{N}'$ is that all pairs we add into a relation $R_\tau$ are based on the relation $S^\Sigma_\tau$ between their lables and $S^\Sigma_\tau$ is symmetric. (C0) is guaranteed by the way we add pairs into each relation.

Next, we deal with the case when the defect is S1 type. Given the defect triple $(u,\sigma+\sigma',v)$, because $\mathcal{N}$ is coherent, we know that there is a foundational bridge $\sigma_0$ between $u$ and $v$ such that $\sigma+\sigma'\in \uparrow \sigma_0$ and $\sigma_0 \nleq \sigma$ and $\sigma_0\nleq \sigma'$.

We will construct a path $s_0 R_{\sigma_1} s_1 R_{\sigma_2}\ldots R_{\sigma_n} s_n$ so that $s_0 = u$, $s_n = v$, $\sigma_{2i+1} = \sigma$ and $\sigma_{2i} = \sigma'$,  each $s_i$ is labeled by an atom $B_i$ and $B_0 S^\Sigma_{\sigma_1} B_1 S^\Sigma_{\sigma_2} \ldots S^\Sigma_{\sigma_n} B_n$ holds in the FCM for $\Sigma$ where $n> 3$ is even.

Since $\nu(u) S^\Sigma_{\sigma +\sigma'}\nu(v)$ in FCM and $S^\Sigma_{\sigma +\sigma'}\subseteq (S^\Sigma_\sigma\cup S^\Sigma_{\sigma'})^\ast$, so there must be a path from $\nu(u)$ to $\nu(v)$ in the FCM each step  of which is  $S^\Sigma_\sigma$ or $S^\Sigma_{\sigma'}$. We take such a path, from which we can construct a walk from $\nu(u)$ to $\nu(v)$, $B_0 S^\Sigma_{\sigma_1} B_1 S^\Sigma_{\sigma_2} \ldots S^\Sigma_{\sigma_n} B_n$ where $\nu(u) = B_0$ $\nu(v) = B_n$, $n>3$, $\sigma_{2i+1} = \sigma$, $\sigma_{2i} = \sigma'$ and $n$ is even by repeating some nodes in the path using the reflexivity of the relations. Define $\mathcal{N}' = (N',\set{R'_\tau}_{\tau\in T},\nu')$ as follows:
\begin{itemize}
    \item $N' := N\cup \set{s_1,s_2\ldots,s_{n-1}}$;
    \item let $s_0 = u$ and $s_n = v$: 
        \begin{itemize}
            \item $R'_\tau := R_\tau$ for all $\tau\in T$ satisfying $\tau\ngeq \sigma$ and $\tau\ngeq \sigma'$; and
            \item $R'_\tau := R_\tau\cup \set{(s_i,s_{i+1}), (s_{i+1},s_i)\mid n>i\geq 0 \text{ is even}}$ for all $\tau\geq \sigma$; and;
            \item $R'_\tau := R_\tau\cup \set{(s_i,s_{i+1}), (s_{i+1},s_i)\mid n>i\geq 0 \text{ is odd}}$ for all $\tau\geq \sigma'$; 
        \end{itemize}
    \item $\nu' = \nu\cup \set{(s_i,B_i)\mid 1\leq i\leq n-1}$.
\end{itemize}

Conditions (C0), (C1) and (C2) of $\mathcal{N}'$ are proved in a similar way to the previous case. 

For (C3), given a path from $s$ to $s'$. Consider three cases. 

In the first case, the starting point $s$ is in $\set{s_1,s_2\ldots,s_{n-1}}$, say  $s = s_i$, and the ending point $s'$ is not in $\set{s_1,s_2\ldots,s_{n-1}}$. 

One of the following cases must hold: 
\begin{itemize}
    \item there is a path from $v$ to $s'$ in $\mathcal{N}$ which does not go via $u$;
    \item there is a path from $u$ to $s'$ in $\mathcal{N}$ which does not go via $v$.
\end{itemize}
Without loss of generality, we assume that there is a path from $v$ to $s'$ in $\mathcal{N}$ which does not go via $u$. It follows that there is $\delta\in T$ such that 
$\uparrow \delta = \set{\tau \in T\mid \text{ there is a $\tau$-path from $v$ to $s'$ in }\mathcal{N}}$.

Next, we consider all paths from $s_i$ to $v$ in $\mathcal{N}'$. All paths from $s_i$ to $v$ either goes via $u$ or does not go via $u$. 
Let 
$$u^+ = \set{\tau\in T\mid \text{ there is a $\tau$-path from $s_i$ to $v$ via $u$ in }\mathcal{N}'}$$ and 
$$
u^- = \set{\tau\in T\mid \text{ there is a $\tau$-path from $s_i$ to $v$ not via $u$ in }\mathcal{N}'}.$$
So $\set{\tau \in T\mid \text{ there is a $\tau$-path from $s_i$ to $v$ in }\mathcal{N}'} = u^+\cup u^-$.

Because there is only one path from  $s_i$ to $v$ in $\mathcal{N}'$ which does not go via $u$, that is, 
$$s_iR'_{\sigma_{i+1}} s_{i+1} R'_{\sigma_{i+2}}\ldots R'_\tau s_{n-1}R'_{\sigma_n} v\enspace .$$
It follows that $u^-$ equals $\uparrow (\sigma+\sigma')$ or $\uparrow \sigma'$, depending on whether $i<n-1$ or $i=n-1$. 
There is only one path from $s_i$ to $u$ in  $\mathcal{N}'$ which does not go via $v$ in $\mathcal{N}'$, that is, 
$uR'_{\sigma_1} s_1 R'_{\sigma_2}\ldots R'_\tau s_{i-1}R'_{\sigma_i} s_i$.
It follows that $\set{\tau\in T\mid \text{ there is a $\tau$-path from $s_i$ to $u$ not via $v$ in }\mathcal{N}'}$ equals $\uparrow (\sigma+\sigma')$ or $\uparrow \sigma$, depending on whether $i>1$ or $i=1$.

When $i = n-1$, $\uparrow (\sigma+\sigma') = \set{\tau\in T\mid \text{ there is a $\tau$-path from $s_i$ to $u$ not via $v$ in }\mathcal{N}'}$. Together with the fact that $R_{\sigma_0}$ is a foundational bridge from $u$ to $v$ in $\mathcal{N}$, it follows that $u^+ = \uparrow (\sigma+\sigma'+\sigma_0).$
Since $\sigma_0\leq \sigma+\sigma'$, 
$u^+ = \uparrow (\sigma+\sigma')$.
Combining with the facts that $u^-$ equals $\uparrow \sigma'$ and $\sigma'\leq \sigma+\sigma'$, it follows that 
$\uparrow \sigma' = \set{\tau\in T\mid \text{ there is a $\tau$-path from $s_i$ to $v$ in }\mathcal{N}'}$, 
that is,  the foundational bridge from $s_i$ to $v$ in $\mathcal{N}'$ is $R'_{\sigma'}$.

When $1<i<n-1$, it is still the case that $u^+ = \uparrow (\sigma+\sigma')$ using a similar argument in the case $i+1 =n$. But $u^-$ equals $\uparrow (\sigma+\sigma')$ in this case. Thus it follows that 
$$\uparrow (\sigma+\sigma') = \set{\tau\in T\mid \text{ there is a $\tau$-path from $s_i$ to $v$ in }\mathcal{N}'}$$
that is,  the foundational bridge from $s_i$ to $v$ in $\mathcal{N}'$ is $R'_{\sigma+\sigma'}$.

When $i = 1$, $\uparrow \sigma = \set{\tau\in T\mid \text{ there is a $\tau$-path from $s_i$ to $u$ not via $v$ in }\mathcal{N}'}$. Together with the fact that $R_{\sigma_0}$ is the foundational bridge from $u$ to $v$ in $\mathcal{N}$, it follows that 
$u^+ = \uparrow (\sigma_0+\sigma)$.
Since $u^-$ equals $\uparrow (\sigma+\sigma')$ in this case and $\sigma_0+\sigma\leq \sigma+\sigma'$, it follows that
$$\uparrow (\sigma_0+\sigma) = \set{\tau\in T\mid \text{ there is a $\tau$-path from $s_i$ to $v$ in }\mathcal{N}'}$$
that is,  the foundational bridge from $s_i$ to $v$ in $\mathcal{N}'$ is $R'_{\sigma_0+\sigma}$. 

Therefore,  the foundational bridge from $s_i$ to $s$ in $\mathcal{N}'$ is either $R'_{\sigma'+\delta}$, $R'_{\sigma+\sigma'+\delta}$ or $R'_{\sigma_0+\sigma+\delta}$.

The second case where the ending point $s'$ is in $\set{s_1,s_2\ldots,s_{n-1}}$ but the starting point $s$ is not in $\set{s_1,s_2\ldots,s_{n-1}}$. This case is symmetric to the first case, so it follows by C0.

In the third case, $s$ and $s'$ are both in $\set{s_1,s_2\ldots,s_{n-1}}$, say $s=s_i$ and $s'=s_j$ and $i\leq j$. The proof is similar to the one we use in the first case for showing that there is a foundational bridge between $s$ and $v$ in $\mathcal{N}'$. In this case, there are two ways to go from $s_i$ to $s_j$. The first one is $s_iR_{\sigma_i+1}\ldots R_{\sigma_j}s_j$ which does not go via $u$ or $v$. It is easy to see that the set of all possible paths via this way is either $\uparrow (\sigma+\sigma')$, $\uparrow \sigma$ or $\uparrow \sigma'$.  The other one goes via $u$ and $v$. The set of all possible paths is $\uparrow (\sigma +\sigma') $. Therefore, the foundational bridge from $s_i$ to $s_j$ in $\mathcal{N}'$ is either $R'_{\sigma+\sigma'}$, $R'_{\sigma}$ or $R'_{\sigma'}$.

In the last case, neither $s$ nor $s'$ is in $\set{s_1,s_2\ldots,s_{n-1}}$. If there is no path from $s$ to $s'$ in $\mathcal{N}$ going via both $u$ and $v$, the foundational bridge from $s$ to $s'$ in $\mathcal{N}$ is the foundational bridge from $s$ to $s'$ in $\mathcal{N}'$.

Assume that there are paths from $s$ to $s'$ in $\mathcal{N}$ going via both $u$ and $v$.  
There is $\delta\in T$ such that $\uparrow \delta = \set{\tau \in T\mid \text{ there is a $\tau$-path from $s$ to $s'$ in }\mathcal{N}}$. 

Obviously, $\uparrow \delta\subseteq \set{\tau \in T\mid \text{ there is a $\tau$-path from $s_i$ to $s'$ in }\mathcal{N}'}$.

For the other direction, take any $\tau$-path from $s$ to $s'$ in $\mathcal{N}'$, 
$sR'_\tau w_1 R'_\tau\ldots R'_\tau w_mR'_\tau s'$.
If the path goes via $s_1,\ldots,s_{n-1}$, then $\sigma\leq \tau$ and $\sigma'\leq \tau$. So $\sigma+\sigma'\leq \tau$ and we can delete $s_1,\ldots,s_{m-1}$ from the path and connect $u$ and $v$ by a $\tau$-path in $\mathcal{N}$, becauase $\sigma_0\leq \sigma+\sigma'\leq \tau$. This implies that there is a path in $\mathcal{N}$:
$sR_\tau w_1 R_\tau\ldots uR_\tau\ldots R_\tau v\ldots  R_\tau w_mR_\tau s'$
So $\tau\in \uparrow \delta$. 

Therefore, $\uparrow \delta = \set{\tau \in T\mid \text{ there is a $\tau$-path from $s$ to $s'$ in }\mathcal{N}'}$.
\end{proof}

\begin{lemma}\label{lem:goodCons}
    For an atom $A\in At(\Sigma)$, there is a good network where one of its nodes is labelled by $A$. 
\end{lemma}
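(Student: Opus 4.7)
The plan is to construct the good network as the limit of an increasing chain of finite coherent networks via dovetailed application of the Repair Lemma. First I would take $\mathcal{N}_0 = (\{s_0\}, \{R^0_\tau = \emptyset\}_{\tau\in T}, \nu_0)$ with $\nu_0(s_0) = A$; since every $R^0_\tau$ is empty, conditions (C0)--(C3) hold vacuously, so $\mathcal{N}_0$ is coherent and $s_0$ has label $A$ as required.

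Next I iteratively apply the Repair Lemma to obtain a chain $\mathcal{N}_0 \lhd \mathcal{N}_1 \lhd \cdots$ of finite coherent networks, each $\mathcal{N}_{n+1}$ eliminating a chosen defect of $\mathcal{N}_n$. The defects are enumerated by a standard dovetailing argument: at stage $n$ I place on the queue every S2-defect $(s, \Ldiamond{\tau}\psi)$ with $s\in N_n$ (finitely many per node, since $\neg \mathrm{FL}(\Sigma)$ is finite) and every S1-defect $(u, \sigma+\sigma', v)$ with $(u,v)\in R^n_{\sigma+\sigma'}$ whose decomposition $\sigma+\sigma'$ has bounded size; fairness then guarantees that every defect ever arising in the chain is addressed at some later stage.

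Then I form the limit $\mathcal{N}_\omega = (\bigcup_n N_n, \{\bigcup_n R^n_\tau\}_{\tau\in T}, \bigcup_n \nu_n)$ and verify that it is good. Conditions (C0), (C1), (C2) transfer immediately to the union since they concern individual pairs and their labels, while (S1) and (S2) follow from the fairness of the enumeration. The delicate condition is (C3): the key observation, implicit in the last case of the Repair Lemma's proof, is that each repair preserves the set of $\tau$-walks between any pair of pre-existing nodes. An S2 repair only attaches a fresh node via a single edge, and an S1 repair inserts an alternating $\sigma$/$\sigma'$-path between $u$ and $v$ whose label $\sigma+\sigma'$ is already above the existing foundational bridge $\sigma_0$ of $(u,v)$, so any new walk through the inserted nodes contributes a label in $\uparrow\sigma_0$ and nothing new. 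Consequently the foundational bridge between any pair stabilizes at the first stage both endpoints are present, and this stable term serves as the foundational bridge in $\mathcal{N}_\omega$.

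The main obstacle I expect is the bookkeeping for the fair enumeration, particularly in the presence of potentially infinitely many distinct S1-decompositions per pair; but once the enumeration is fixed, the substantive work lies entirely in the Repair Lemma (already granted) and in the walk-preservation observation above. The resulting $\mathcal{N}_\omega$ is coherent and saturated, hence good, and contains $s_0$ labelled by $A$.
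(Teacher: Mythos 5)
Your proposal is correct and follows essentially the same route as the paper: start from the one-node network labelled $A$, fix a fair (countable) enumeration of potential defects, repair them one at a time via the Repair Lemma, and take the union of the resulting chain. The only difference is presentational --- you make explicit the stabilization of foundational bridges needed for (C3) to survive the limit, a point the paper subsumes under ``our repair preserves all conditions of coherency'' --- so this is a slightly more careful write-up of the same argument.
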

\begin{proof}
    The proof is similar to that of Theorem 4.65 in \cite{Blackburn2001}.

    Choose some set $S = \set{s^i\mid i\in \omega}$ and enumerate the set of potential defects (that is, the union of the sets $S\times T\times T\times S$ and $S\times T\times \mathcal{L}$). Given an atom $A\in At(\Sigma)$, let $\mathcal{N}_0$ be the network $(\set{s^0},\emptyset,(s^0,A))$. Trivially, $\mathcal{N}_0$ is a finite, coherent network.

    Let $n\geq 0$ and suppose $\mathcal{N}_n$ is a finite, coherent network. If there is no defect in $\mathcal{N}_n$, let $\mathcal{N}_m = \mathcal{N}_n$ for all $m>n$. If there are defects in $\mathcal{N}_n$, let $D$ be the defect of $\mathcal{N}_n$ that is minimal in our enumeration.  Form $\mathcal{N}_{n+1}$ by repairing the defect $D$ as described in the proof of the Repair Lemma. Observe that $D$ will not be a defect of any network extending $\mathcal{N}_{n+1}$.

    Let $\mathcal{N} = (N,\set{R_\tau}_{\tau\in T},\nu)$ be given by 
    $N = \bigcup_{n\in \omega} N_n,\quad R_\tau = \bigcup_{n\in \omega} R^n_\tau \text{ for all }\tau\in T,\quad \nu = \bigcup_{n\in \omega} \nu_n$.
    It is a coherent network since our repair preserves all conditions of coherency. Moreover, $\mathcal{N}$ is saturated. Suppose otherwise, that is,there are some defects in $\mathcal{N}$. From these defects, take the minimal one in the enumeration, say $D_k$. It must be a defect in some approximation of $\mathcal{N}$, say $\mathcal{N}_n$. Although $D_k$ is not necessarily the minimal defect in $\mathcal{N}_n$, there are finite defects before $D_k$ in $\mathcal{N}_n$ to be repaired afterwards. After finite many steps of repair, $D_k$ becomes the minimal defect to be repaired. So $D_k$ is repaired at some stage of the process of repair, contradiction.   
\end{proof}

\subsection{Proving the completeness of $S5\mathsf{LCDK}$}

For any finite consistent set of formulas $\Sigma$, extend it to an atom $A$ over $\Sigma$ and construct a good network $\mathcal{N}$ according to Lemma \ref{lem:goodCons}. Construct the perfect network $\mathcal{P}_\mathcal{N}$ based on the good network $\mathcal{N}$. Let $\mathfrak{F}_\mathcal{P}$ be the frame underlying the perfect network $\mathcal{P}$. Finally, prove the truth lemma for the induced model $\mathfrak{I}_\mathcal{P} = (\mathfrak{F}_\mathcal{P},V_\mathcal{N})$ using the Existence Lemma for perfect network (Lemma \ref{lem:existencePerfect}).
Because the frame underlying a perfect network is S5-regular, we have proved the completeness of $S5\mathsf{LCDK}$ with respect to the class of S5-regular frames.

\subsection{Proving the completeness of $S4\mathsf{LCDK}$}

In the case of S4, Lemma \ref{lem:reflexSymm} should be changed to 
\begin{lemma}
    The canonical relations are reflexive.
\end{lemma}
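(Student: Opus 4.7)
The plan is to reuse exactly the reflexivity half of the argument behind Lemma \ref{lem:reflexSymm}. The only ingredient needed is axiom T, $\Lbox{\tau}\varphi \rightarrow \varphi$, which is present in both $S5\mathsf{LCDK}$ and $S4\mathsf{LCDK}$ (only axiom 5 has been dropped). Applying T to $\neg \hat{A}$ and taking the contrapositive yields $\vdash \hat{A} \rightarrow \Ldiamond{\tau}\hat{A}$.

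Given this, for any atom $A \in At(\Sigma)$ and any term $\tau \in \mathsf{T}$, the formula $\hat{A}$ is consistent by the definition of an atom as a maximal consistent subset of $\neg \mathrm{FL}(\Sigma)$. Combining this with $\vdash \hat{A} \rightarrow \Ldiamond{\tau}\hat{A}$ shows that $\hat{A} \wedge \Ldiamond{\tau}\hat{A}$ is consistent. By the defining clause of the canonical relation, this is exactly $A S^\Sigma_\tau A$, establishing reflexivity for every $\tau$.

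There is no genuine obstacle in this particular step; the S4 version of the lemma is strictly weaker than the S5 version, since we no longer claim (or need to derive) symmetry from axiom 5. The real work in adapting the rest of the completeness proof to S4 lies elsewhere: the loss of symmetry invalidates condition (C0), which will in turn require directed reformulations of the notions of walk, path and foundational bridge, together with corresponding modifications to the Repair Lemma.
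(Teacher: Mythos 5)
Your proof is correct and is essentially the paper's own argument: the paper likewise derives $\vdash \hat{A}\rightarrow \Ldiamond{\tau}\hat{A}$ (from axiom T, which survives the passage to $S4\mathsf{LCDK}$) and concludes that $\hat{A}\wedge\Ldiamond{\tau}\hat{A}$ is consistent, hence $AS^\Sigma_\tau A$. Your closing remark that the real adaptation work lies in the loss of symmetry (condition (C0) and the Repair Lemma) also matches the paper's discussion.
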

Proposition \ref{prop:AtauEBtau} should be changed to 
\begin{proposition}
    $AS^\Sigma_\tau B \text{ implies that } B_\tau \subseteq A_\tau$
\end{proposition}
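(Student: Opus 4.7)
The plan is to adapt the canonical-model argument sketched for the S5 version of this fact (Proposition \ref{prop:AtauEBtau}), noting that only one of the two inclusions $A_\tau = B_\tau$ survives when axiom 5 is dropped. The crucial point is that in the S4 canonical model $\mathcal{M}$, every $R^\mathcal{M}_\tau$ is reflexive and transitive (which is all we need here), even though it may fail to be symmetric. Proposition \ref{prop:inherit} goes through in the S4 setting with essentially the same proof, since it uses only the existence lemma for the canonical model together with the definition of $S^\Sigma_\tau$, neither of which depends on axiom 5.

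First, I would invoke Proposition \ref{prop:inherit} for $A S^\Sigma_\tau B$ to obtain MCSs $a,b$ of $\mathcal{M}$ with $\hat A \in a$, $\hat B \in b$ and $a R^\mathcal{M}_\tau b$. Because $\hat A$ is the conjunction of the formulas in $A$ (an atom of $\neg\mathrm{FL}(\Sigma)$), we get $A \subseteq a$, and similarly $B \subseteq b$. Now let $\varphi \in B_\tau$, i.e.\ $\Ldiamond{\tau}\varphi \in B \subseteq b$. By the existence lemma for the canonical model, pick $c$ with $b R^\mathcal{M}_\tau c$ and $\varphi \in c$. Applying transitivity of $R^\mathcal{M}_\tau$ to $a R^\mathcal{M}_\tau b R^\mathcal{M}_\tau c$ yields $a R^\mathcal{M}_\tau c$, and hence $\Ldiamond{\tau}\varphi \in a$ (this direction of the existence lemma is just the truth definition).

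To conclude $\Ldiamond{\tau}\varphi \in A$, I would use the fact that $\neg\mathrm{FL}(\Sigma)$ is closed under subformulas: since $\Ldiamond{\tau}\varphi \in B \subseteq \neg\mathrm{FL}(\Sigma)$, the formula $\Ldiamond{\tau}\varphi$ itself lies in $\neg\mathrm{FL}(\Sigma)$. Together with $A = a \cap \neg\mathrm{FL}(\Sigma)$ (the characterization of atoms via MCSs), this gives $\Ldiamond{\tau}\varphi \in A$, i.e.\ $\varphi \in A_\tau$, as required.

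The main obstacle—really the only subtle point—is making sure that Proposition \ref{prop:inherit} and the existence lemma for the canonical model still apply in the S4 setting. Both are routine: neither proof uses axiom 5. A purely syntactic alternative is also available and might be worth mentioning: from $\Ldiamond{\tau}\varphi \in B$ one gets $\vdash \hat B \to \Ldiamond{\tau}\varphi$, hence $\vdash \Ldiamond{\tau}\hat B \to \Ldiamond{\tau}\Ldiamond{\tau}\varphi$ by monotonicity, and then axiom 4 collapses this to $\vdash \Ldiamond{\tau}\hat B \to \Ldiamond{\tau}\varphi$, so the consistency of $\hat A \wedge \Ldiamond{\tau}\hat B$ forces the consistency of $\hat A \wedge \Ldiamond{\tau}\varphi$ and therefore $\Ldiamond{\tau}\varphi \in A$. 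This version makes explicit that the axiom doing the work in S4 (where symmetry is lost) is precisely axiom 4.
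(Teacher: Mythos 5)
Your proposal is correct, and both of your arguments work; but your \emph{primary} route is not the one the paper takes, while your ``purely syntactic alternative'' at the end is exactly the paper's proof. The paper argues in three lines: from $\Ldiamond{\tau}\varphi\in B$ and the consistency of $\hat A\wedge\Ldiamond{\tau}\hat B$ it gets consistency of $\hat A\wedge\Ldiamond{\tau}\Ldiamond{\tau}\varphi$, collapses the double diamond by axiom 4, and concludes $\Ldiamond{\tau}\varphi\in A$ by maximality of $A$ in $\neg\mathrm{FL}(\Sigma)$. Your main argument instead detours through the canonical model: it invokes Proposition \ref{prop:inherit} to lift $A,B$ to MCSs $a,b$ with $aR^{\mathcal{M}}_\tau b$, uses the existence lemma plus transitivity of $R^{\mathcal{M}}_\tau$ to push $\Ldiamond{\tau}\varphi$ from $b$ back to $a$, and then restricts to $\neg\mathrm{FL}(\Sigma)$. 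This mirrors how the paper proves the S5 analogue (Proposition \ref{prop:AtauEBtau} is derived there from Proposition \ref{prop:inherit} and the fact that $R^{\mathcal{M}}_\tau$ is an equivalence relation), so it has the virtue of making the S4/S5 parallel transparent and of isolating exactly which frame property (transitivity versus symmetry) each inclusion needs; the cost is that you must check that Proposition \ref{prop:inherit} and the canonical existence lemma survive the removal of axiom 5, which you correctly note is routine. The paper's syntactic version is shorter and self-contained, needing nothing beyond axiom 4 and the definition of $S^\Sigma_\tau$. Either proof is acceptable; your closing remark that ``the axiom doing the work is precisely axiom 4'' is the heart of the paper's own argument.
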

Lemma \ref{lem:existenceTFCM} makes use of Proposition \ref{prop:AtauEBtau}. Its proof in the S4 case should be changed accordingly.

The condition C0 in the definition of coherent networks should be removed. The proofs in Section \ref{sec:stepBystep} which are based on the condition C0 need to be modified accordingly. In the proof of the repair lemma, when constructing new networks, we keep the forward pairs $(s,s')$, $(s_i,s_{i+1})$ and get rid of the backward pairs $(s',s)$, $(s_{i+1},s_i)$. The proof that our repair of the two types of defects preserves all conditions of coherency becomes simpler, because the paths from $s_1$ to $s_{n-1}$ in $\mathcal{N}'$ are all one-way paths. For S1 defect, the original proof still works. For S2 defect, we only need to consider the first, the third and the fourth cases we consider in the original proof. For the first case, the only way to reach $s'$ from $s_i$ is via $v$ and all paths from $s_i$ to $v$ go only via $s_k$ where $i\leq k <n$. For the third case, all path from $s_i$ to $s_j$ go only via $s_k$ where $i\leq k\leq j$. For the fourth case, the orignal proof still works.

\paragraph{Acknowledgements} I'd like to thank Alexandru Baltag for his suggestions and comments on several early versions of this paper, which help clarify some critical components of the logic. Many thanks to the three anonymous reviewers for their useful suggestions on improving the presentation of the paper. 

\nocite{*}
\bibliographystyle{eptcs}
\bibliography{LCDK}
\end{document}